\begin{document}
\makeatletter
\renewcommand{\maketag@@@}[1]{\hbox{\m@th\normalsize\normalfont#1}}
\makeatother
\title{On the Degrees of Freedom of the Symmetric Multi-Relay MIMO Y Channel}
\author{Tian~Ding, 
        Xiaojun~Yuan,~\IEEEmembership{Senior Member,~IEEE,}
        and~Soung~Chang~Liew,~\IEEEmembership{Fellow,~IEEE}
\thanks{This work has been submitted to the IEEE for possible publication.  Copyright may be transferred without notice, after which this version may no longer be accessible.}
\thanks{
T. Ding and S.C. Liew are with the Department of Information Engineering, the Chinese University of Hong Kong. X. Yuan is with the School of Information Science and Technology, ShanghaiTech University, Shanghai.}}
\maketitle
\IEEEpeerreviewmaketitle

\begin{abstract}
In this paper, we study the degrees of freedom (DoF) of the symmetric multi-relay multiple-input multiple-output (MIMO) Y channel, where three user nodes, each with $M$ antennas, communicate via $K$ geographically separated relay nodes, each with $N$ antennas. For this model, we establish a general DoF achievability framework based on linear precoding and post-processing methods. The framework poses a nonlinear problem with respect to user precoders, user post-processors and relay precoders. To solve this problem, we adopt an \textit{uplink-downlink asymmetric} strategy, where the user precoders are designed for signal alignment and the user post-processors are used for interference neutralization. With the user precoder and post-processor designs fixed as such, the original problem then reduces to a problem of relay precoder design. To address the solvability of the system, we propose a general method for solving matrix equations. Together with the techniques of antenna disablement and symbol extension, an achievable DoF of the considered model is derived for an arbitrary setup of $\left(K,M,N\right)$. We show that for $K\geq 2$, the optimal DoF is achieved for $\frac{M}{N} \in \Big[0,\max\left\{\frac{\sqrt{3K}}{3},1\right\}\Big)\cup\Big[\frac{3K+\sqrt{9K^2-12K}}{6},\infty\Big)$. We also show that the uplink-downlink asymmetric design proposed in this paper considerably outperforms the conventional approach based on uplink-downlink symmetry.
\end{abstract}

\begin{IEEEkeywords}
Multiway relay channel, MIMO, network coding, signal alignment, symbol extension
\end{IEEEkeywords}

\section{Introduction}
Various wireless relaying techniques have been extensively studied for decades due to their capability to extend the coverage and enhance the capacity of wireless networks \cite{relay1,relay12,relay2,PNC}. In particular, two-way relaying based on physical-layer network coding (PNC) has attracted much research interest in the past decade \cite{PNC,TWRC1,TWRC2,TWRC3, TWRC4}. In the two-way relay channel, two users exchange information via a single relay node. Compared with conventional one-way relaying, PNC potentially doubles the spectral efficiency by allowing a relay node to decode and forward message combinations rather than individual messages. Later, the idea of PNC was extended to support efficient communications over multiway relay channels (mRC) \cite{2}, where multiple users exchange data with the help of a single relay. Efficient PNC design has been studied for various data exchange models, including pairwise data exchange \cite{MIMO3,Xchannel}, full data exchange \cite{MIMO3,Gao}, and clustered pairwise/full data exchange \cite{MIMO3,MIMO2, multirelay, yuan}. Multiple-input multiple-output (MIMO) techniques have also been incorporated into PNC-aided relay networks to achieve spatial multiplexing \cite{MIMO1}.

The capacity of the MIMO mRC generally remains a challenging open problem \cite{capacity1,capacity4}. Existing work \cite{DOF1,DOF2,DOF3,3,32,33,review1,review2} was mostly focused on analyzing the degrees of freedom (DoF) that characterizes the capacity slope at high signal-to-noise ratio (SNR). Various signaling techniques have been developed to intelligently manipulate signals and interference based on the ideas of PNC and interference alignment \cite{Jafar}. Particularly, the authors in \cite{3,32,33} studied the DoF of the MIMO Y channel, where three users exchange data in a pairwise manner with the help of a single relay. To derive the DoF of this model, a key difficulty is how to jointly optimize the linear processors, including the precoders at the user transmitters, the precoder at the relay, and the post-processers at user receivers. This problem was elegantly solved in \cite{32} by optimal design of the signal space seen at the relay, where the user precoders and post-processors are constructed by \textit{pairwise signal alignment} and \textit{uplink-downlink symmetry}, and the relay precoder by appropriate orthogonal projections. Similar ideas have also been used to derive the DoF of other multiway relay models \cite{multirelay, capacity1}.

In the work on MIMO mRC mentioned above, a major limitation is that a single relay node is employed to serve multiple user nodes simultaneously. This implies that the relay node is usually the performance bottleneck of the overall network \cite{multirelay,yuan}. As such, some recent work began to explore the potential of deploying more relay nodes for enhancing the network capacity. For instance, the authors in \cite{Lee} derived an achievable DoF of the two-relay MIMO mRC in which two pairs of users exchange messages in a pairwise manner via two relays. Later, the work in \cite{capacity2} improved the DoF result in \cite{Lee} by using the techniques of pairwise signal alignment and uplink-downlink symmetric design. The extension to the case of more than two user pairs was also considered in \cite{capacity2}. However, the DoF characterization of the multi-relay MIMO mRC is still at a very initial stage. The reason is twofold. First, for a multi-relay mRC, the relays are geographically separated and hence cannot jointly process their received signals. This implies that manipulating the relay signal space is far more difficult than that in the single-relay case. Although some of the existing techniques for single-relay mRCs can be directly borrowed for signaling design in a multi-relay mRC, the efficiency of these techniques is no longer guaranteed. Second, for given user precoders and post-processors, the solvability problem for a MIMO mRC (with single or multiple relays) can be converted to a linear system with certain rank constraints. A substantial difference between the single-relay and multiple-relay MIMO mRCs is that the linear system for the multi-relay involves multiple matrix variables, and so solving the corresponding achievability problem is much more challenging. For example, the MIMO multipair two-way relay channel with two relay nodes was considered in \cite{capacity2}. The achievability proof therein relies on some recent progresses on the solvability of linear matrix systems, and is difficult to be extended to the case with more than two relays or to other multi-relay mRCs. 

In this paper, we analyze the DoF of the symmetric multi-relay MIMO Y channel, where three user nodes, each with $M$ antennas, communicate with each other via $K$ relay nodes, each with $N$ antennas. Compared with the MIMO Y channel in \cite{3}, a critical difference is that our new model contains an arbitrary number of relays, rather than only a single relay. Following \cite{capacity2}, we formulate a general DoF achievability problem for the multi-relay MIMO Y channel based on linear processing techniques, involving the design of user precoders, relay precoders, and user post-processors. The main contributions of this paper are as follows.
\begin{itemize}
\item
In contrast to the conventional uplink-downlink symmetric design which is widely used in single-relay MIMO mRCs, we propose a new uplink-downlink asymmetric approach to solve the DoF achievability problem of the symmetric multi-relay MIMO Y channel. Specifically, in our approach, only user precoders are designed based on signal space alignment; the user post-processors are designed directly for interference neutralization. Furthermore, we show that under certain conditions, the uplink-downlink asymmetry allows the relays to deactivate a portion of receiving (not transmitting) antennas to facilitate the signal space alignment at relays. This implies that under certain conditions, some of the receiving antennas at the relays are redundant to achieve the derived DoF.
\item
Given the designed user precoders and post-processors, the original problem boils down to a linear system on the relay precoders with certain rank constraints. Due to the presence of multiple relays, the linear system involves multiple matrix variables. To tackle the solvability of this system, we establish a new technique to solve linear matrix equations with rank constraints. We emphasize that this technique can potentially be used to analyze the DoF of other multi-relay MIMO mRCs with various data exchange models, e.g., pairwise data exchange \cite{Xchannel} and clustered full data exchange \cite{MIMO2, multirelay, yuan}.
\item
Based on the above new techniques, we derive an achievable DoF of the symmetric multi-relay MIMO Y channel with an arbitrary configuration of $\left(M,N,K\right)$. Our achievable DoF is considerably higher than that derived by the conventional uplink-downlink symmetric approach. Also, a DoF upper bound is presented by assuming full cooperation among the relays and treating the multiple relays together as a single large relay. We establish the optimality of our achievable DoF for $\frac{M}{N} \in \Big[0,\max\left\{\frac{\sqrt{3K}}{3},1\right\}\Big) \cup\Big[\frac{3K+\sqrt{9K^2-12K}}{6},\infty\Big)$ by showing that the achieved DoF matches the upper bound.
\end{itemize}

\textit{Notation}: We use bold upper and lower case letters for matrices and column vectors, respectively. $\mathbb{C}^{m \times n}$ denotes the $m\times n$ dimensional complex space. $\mathbf{0}_{m\times n}$ and $\mathbf{I}_{n}$ represent the $m \times n$ zero matrix and the $n$-dimensional identity matrix, respectively. For any matrix $\mathbf{A}$, $\mathrm{vec}(\mathbf{A})$ denotes the vectorization of $\mathbf{A}$ formed by stacking the columns of $\mathbf{A}$ into a single column vector. Moreover, $\otimes$ represents the Kronecker product operation. 

\section{System Model}
\subsection{Channel Model}
Consider a symmetric multi-relay MIMO Y channel as shown in Fig. \ref{Fig:1}, where three user nodes, each equipped with $M$ antennas, exchange information with the help of $K$ relay nodes, each with $N$ antennas. Pairwise data exchange is employed, i.e., every user delivers two independent messages, one to each of the other two users. We assume that the information delivering is half-duplex, i.e., nodes in the network cannot transmit and receive signals simultaneously in a single frequency band. Every round of data exchange consists of two phases, namely, the uplink phase and the downlink phase. The two phases have equal duration $T$, where $T$ is an integer representing the number of symbols within each phase interval.

\begin{figure}
 \setlength{\abovecaptionskip}{-0.1cm}
  \centering
  \includegraphics[trim={0cm 2cm 0cm 0cm}, width=8cm]{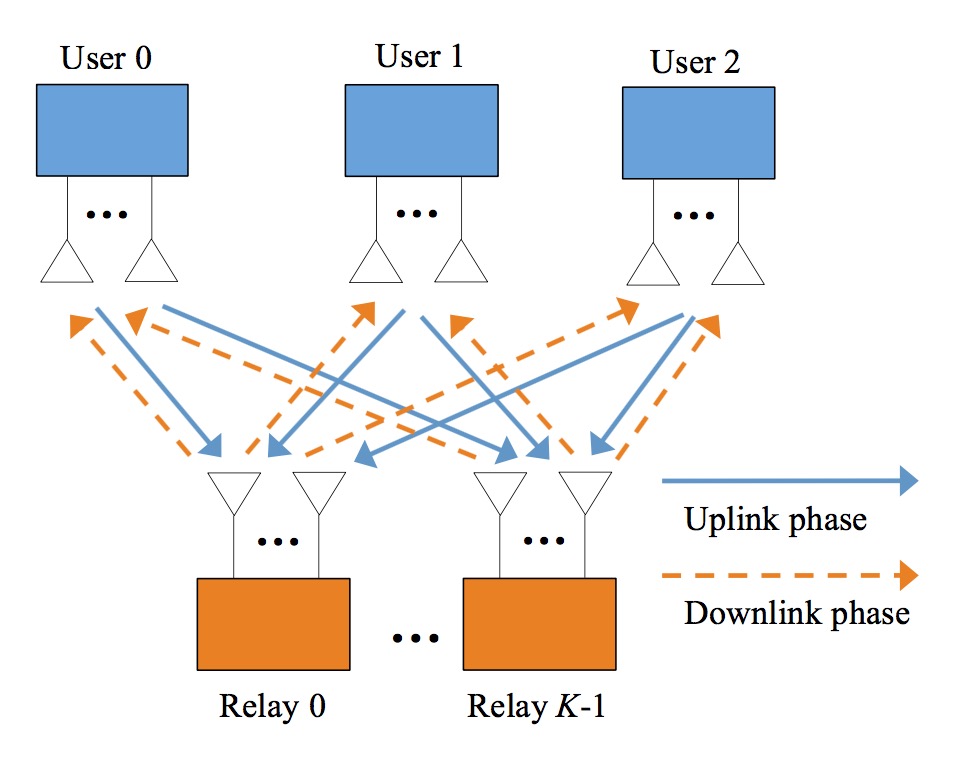}
  \caption{The system model of a symmetric multi-relay MIMO Y channel. The communication protocol consists of two phases: uplink phase and downlink phase.}\label{Fig:1}
\end{figure}

In the uplink phase, the users transmit signals to the relays simultaneously. The received signal at each relay is written by
\begin{equation}
\label{system1}
\mathbf{Y}_{\mathrm{R},k} =  \sum_{j=0}^2\mathbf{H}_{k,j}\mathbf{X}_j +\mathbf{Z}_{\mathrm{R},k}, \quad k= 0,1,\cdots,K-1
\end{equation}
where $\mathbf{H}_{k,j}\in \mathbb{C}^{N\times M}$ denotes the channel matrix from user $j$ to relay $k$; $\mathbf{X}_j \in \mathbb{C}^{M \times T}$ is the transmitted signal of user $j$; $\mathbf{Y}_{\mathrm{R},k} \in \mathbb{C}^{N \times T}$ is the received signal at relay $k$; $\mathbf{Z}_{\mathrm{R},k} \in \mathbb{C}^{N \times T}$ is the additive white Gaussian noise (AWGN) matrix at relay $k$, with the entries independently drawn from $\mathcal{CN}(0, \sigma_{\mathrm{R},k}^2)$. Note that $\sigma_{\mathrm{R},k}^2$ is the noise power at relay $k$. The power constraint for user $j$ is $\frac{1}{T}\mathrm{tr}(\mathbf{X}_{j}\mathbf{X}^H_{j}) \leq P_j$, where $P_j$ is the maximum transmission power allowed at user $j$.

In the downlink phase, the relays broadcast signals to the users. The received signal at each user is represented by
\begin{equation}
\label{system2}
\mathbf{Y}_j = \sum_{k=0}^{K-1}\mathbf{G}_{j,k}\mathbf{X}_{\mathrm{R},k} + \mathbf{Z}_j, \quad j= 0,1,2,
\end{equation}
where $\mathbf{G}_{j,k} \in \mathbb{C}^{M\times N}$ is the channel matrix from relay $k$ to user $j$; $\mathbf{X}_{\mathrm{R},k} \in \mathbb{C}^{N \times T}$ is the transmitted signal from relay $k$; $\mathbf{Y}_j \in \mathbb{C}^{M \times T}$ is the received signal at user $j$; $\mathbf{Z}_j \in \mathbb{C}^{M \times T}$ is the AWGN matrix at user $j$, with entries independently drawn from $\mathcal{CN}(0, \sigma_j^2)$. Here, $\sigma_{j}^2$ is the noise power at user $j$. The power constraint of relay $k$ is given by $\frac{1}{T}\mathrm{tr}\left(\mathbf{X}_{\mathrm{R},k}\mathbf{X}_{\mathrm{R},k}^H\right) \leq P_{\mathrm{R},k}$, where $P_{\mathrm{R},k}$ is the power budget of relay $k$.

The entries of channel matrices $\left\{\mathbf{H}_{k,j}\right\}$ and $\left\{\mathbf{G}_{j,k}\right\}$ are drawn from a continuous distribution, implying that the channel matrices are of full column or row rank, whichever is smaller, with probability one. We assume that channel state information (CSI) is globally known at every node in the model, following the convention in \cite{MIMO1,MIMO3,capacity1,Xchannel, Gao, multirelay, MIMO2, yuan}.\footnote{To realize the scheme in this paper, global CSI is sufficient but not necessary for every node. Each node only needs to know its linear processor designed in this scheme. There are many ways to achieve this. For example, we can employ a central controller that collects global CSI, computes the linear processors of the nodes, and then transmits the linear processors to their corresponding nodes. This will reduce to some extent the system overhead of global CSI acquisition at every node, without compromising the DoF.} Moreover, for notational convenience, we interpret the user index by modulo 3, e.g., user 3 is the same as user 0.

\subsection{Degrees of Freedom}
The goal of this paper is to analyze the degrees of freedom of the symmetric multi-relay MIMO Y channel described above. For convenience of discussion, we assume the same power constraint at each node, i.e., $P_0 = P_1 = P_2 = P$ and $P_{\mathrm{R},0} = P_{\mathrm{R},1} = \cdots = P_{\mathrm{R},K-1} = P$, which will not compromise the generality of the DoF results derived in this paper. Let $m_{j,j'} \in \{1,2,\cdots,2^{R_{j,j'}T}\}$ be the message from user $j$ to $j'$, where $R_{j,j'}$ is the corresponding information rate, for $j,j' =0,1,2$ and $j \not= j'$. Note that $R_{j,j'}$ is in general a function of power $P$, denoted by $R_{j,j'}(P)$. An information rate $R_{j,j'}(P)$ is said to be achievable if the error probability of decoding message $m_{j,j'}$ at receiver $j'$ approaches zero as $T \rightarrow \infty$. An achievable DoF of user $j$ to user $j'$ is defined as
\begin{equation}
d_{j,j'} = \lim\limits_{P \rightarrow \infty}\frac{R_{j,j'}(P)}{\log(P)}.
\end{equation}
Intuitively, $d_{j,j'}$ can be interpreted as the number of independent spatial data streams that user $j$ can reliably transmit to user $j'$ during each round of data exchange. An achievable total DoF of the symmetric multi-relay MIMO Y channel is defined as
\begin{equation}
\label{d_sum}
d_\mathrm{sum} = \frac{1}{2}\sum_{ \substack{ 0 \leq j,j' \leq 2\\j\not=j' \\}} d_{j,j'}.
\end{equation}
Note that the factor $\frac{1}{2}$ in \eqref{d_sum} is due to half-duplex communication. The optimal total DoF of the considered model, denoted by $d^\mathrm{opt}_\mathrm{sum}$, is defined as the supremum of $d_\mathrm{sum}$. In this paper, we assume a symmetric DoF setting with $d_{j,j'}=d$ for any $0\leq j,j' \leq 2$, $j \not =j'$. Then the total achievable DoF can be represented by $d_\mathrm{sum} = \frac{6d}{2} = 3d$.

We now present a DoF upper bound by assuming full cooperation among the relays. Under this assumption, the system model in \eqref{system1} and \eqref{system2} reduces to a single-relay MIMO Y channel, with $KN$ antennas at the relay. Therefore, the optimal DoF of such a single-relay MIMO Y channel naturally serves as a DoF upper bound of the model considered in \eqref{system1} and \eqref{system2}. From \cite{32}, this upper bound is given by
\begin{align}
\label{upperbound}
d_\mathrm{sum} \leq \min\left\{\frac{3M}{2}, KN\right\}.
\end{align}
Note that the optimal DoF in \cite{32} is derived for full-duplex communication. Thus the upper bound in \eqref{upperbound} is scaled by a factor of $\frac{1}{2}$ due to the half-duplex loss.

\subsection{Linear Processing}
In this paper, an achievable DoF of the symmetric multi-relay MIMO Y channel is derived by linear processing techniques. The message $m_{j,j'}$, $j' \not= j$, is encoded into $\mathbf{S}_{j,j'}\in \mathbb{C}^{d \times T}$, with $d$ independent spatial streams in $T$ channel uses. The transmitted signal of user $j$ is given by
\begin{equation}
\label{linear1}
\mathbf{X}_j = \sum_{j'\not=j}\mathbf{U}_{j,j'}\mathbf{S}_{j,j'},\quad j = 0,1,2,
\end{equation}
where $\mathbf{U}_{j,j'}\in \mathbb{C}^{M\times d}$ is the linear precoding matrix for $\mathbf{S}_{j,j'}$. An amplify-and-forward scheme is employed at the relays. Specifically, the transmitted signal of each relay is represented by
\begin{equation}
\label{linear2}
\mathbf{X}_{\mathrm{R},k} = \mathbf{F}_k\mathbf{Y}_{\mathrm{R},k},\quad  k=0,1,\cdots,K-1,
\end{equation}
where $\mathbf{F}_k \in \mathbb{C}^{N \times N}$ is the precoding matrix of relay $k$.

With \eqref{system1}, \eqref{linear1}, and \eqref{linear2}, we can express the received signal of user $j$ in \eqref{system2} as
\begin{align}
\label{received signal}
\mathbf{Y}_j = &\underbrace{\sum_{k = 0}^{K-1}\! \sum_{j' \not = j}\!\mathbf{G}_{j,k}\mathbf{F}_k\mathbf{H}_{k,j'}\mathbf{U}_{j',j}\mathbf{S}_{j',j}}_{\text{desired signal}} \! +\! \underbrace{\sum_{k = 0}^{K-1}\sum_{j' \not = j}\mathbf{G}_{j,k}\mathbf{F}_k\mathbf{H}_{k,j}\mathbf{U}_{j,j'}\mathbf{S}_{j,j'}}_{\text{self-interference}}\!+\!\underbrace{\sum_{k = 0}^{K-1}\!\!\sum_{j' \not = j'' \atop j', j'' \not =  j}\!\!\mathbf{G}_{j,k}\mathbf{F}_k\mathbf{H}_{k,j'}\mathbf{U}_{j',j''}\mathbf{S}_{j',j''}}_{\text{other interference}} \nonumber\\
& +\underbrace{\sum_{k = 0}^{K-1}\mathbf{G}_{j,k}\mathbf{F}_k\mathbf{Z}_{\mathrm{R},k}+\mathbf{Z}_j}_{\text{noise}}.
\end{align}
In the above, $\mathbf{Y}_j$ consists of four signal components: the desired signal, the self interference, the other interference and the noise. Since user $j$ perfectly knows the CSI and the self message $\left\{\mathbf{S}_{j,j'}, \forall j' \not= j\right\}$, the self-interference term in \eqref{received signal} can be pre-cancelled before further processing. Each user $j$ is required to decode $2d$ spatial streams, $d$ from each of the other two users. To this end, there must be an interference-free subspace with dimension $2d$ in the receiving signal space of user $j$. More specifically, denote by $\mathbf{V}_j \in \mathbb{C}^{2d \times M}$ a projection matrix, with $\mathbf{V}_j\mathbf{Y}_j$ being the projected image of $\mathbf{Y}_j$ in the subspace spanned by the row space of $\mathbf{V}_j$. Then, to ensure the decodability of $\mathbf{S}_{j',j}$ at user $j$, we should appropriately design $\{\mathbf{U}_{j,j'}\}$, $\{\mathbf{F}_k\}$, and $\{\mathbf{V}_j\}$ to satisfy two sets of requirements, as detailed below.

First, $\mathbf{V}_j\mathbf{Y}_j$ should be free of interference. That is, the following interference neutralization requirements should be met:
\begin{subequations}
\label{zeroforcing}
\begin{align}
\label{zeroforcing_1}
\sum_{k=0}^{K-1}\mathbf{V}_0\mathbf{G}_{0,k}\mathbf{F}_k\mathbf{H}_{k,1}\mathbf{U}_{1,2}=\mathbf{0},\quad &\quad 
\sum_{k=0}^{K-1}\mathbf{V}_0\mathbf{G}_{0,k}\mathbf{F}_k\mathbf{H}_{k,2}\mathbf{U}_{2,1}=\mathbf{0},\\
\label{zeroforcing_12}
\sum_{k=0}^{K-1}\mathbf{V}_1\mathbf{G}_{1,k}\mathbf{F}_k\mathbf{H}_{k,2}\mathbf{U}_{2,0}=\mathbf{0},\quad &\quad 
\sum_{k=0}^{K-1}\mathbf{V}_1\mathbf{G}_{1,k}\mathbf{F}_k\mathbf{H}_{k,0}\mathbf{U}_{0,2}=\mathbf{0},\\
\label{zeroforcing_2}
\sum_{k=0}^{K-1}\mathbf{V}_2\mathbf{G}_{2,k}\mathbf{F}_k\mathbf{H}_{k,0}\mathbf{U}_{0,1}=\mathbf{0},\quad &\quad 
\sum_{k=0}^{K-1}\mathbf{V}_2\mathbf{G}_{2,k}\mathbf{F}_k\mathbf{H}_{k,1}\mathbf{U}_{1,0}=\mathbf{0}.
\end{align}
Here, ``interference neutralization" refers to a special transceiver design strategy for interference cancellation such that a common source of interference from different paths cancels itself at a destination. Second, user $j$ needs to decode $2d$ spatial streams from the projected signal $\mathbf{V}_j\mathbf{Y}_j \in \mathbb{C}^{2d\times T}$. To ensure the decodability, the desired signal in \eqref{received signal} after projection should be of rank $2d$. Define
\begin{equation}
\mathbf{W}_{k,j} = \left[\mathbf{H}_{k,j+1}\mathbf{U}_{j+1,j},\mathbf{H}_{k,j-1}\mathbf{U}_{j-1,j}\right], \quad\!\!j = 0,1,2. \nonumber
\end{equation}
Then $\sum_{k=0}^{K-1}\mathbf{V}_j\mathbf{G}_{j,k}\mathbf{F}_k\mathbf{W}_{k,j}$ represents the effective channel for the messages desired by user $j$. To ensure the decodability of $2d$ spatial streams at each user $j$, we have the following rank requirements:
\begin{align}
\label{rank}
\mathrm{rank}\left(\sum_{k=0}^{K-1}\mathbf{V}_j\mathbf{G}_{j,k}\mathbf{F}_k\mathbf{W}_{k,j}\right)=2d, \quad\!\! j=0,1,2.
\end{align}
\end{subequations}

Given an antenna setup $\left(M,N\right)$ and a target DoF $d$, if there exist suitable $\{\mathbf{U}_{j',j},\mathbf{F}_k,\mathbf{V}_j\}$ satisfying \eqref{zeroforcing} for randomly generated channel matrices $\{\mathbf{H}_{k,j},\mathbf{G}_{j,k}\}$ with probability one, then a total DoF $d_\mathrm{sum}=3d$ is achieved by the proposed linear processing scheme. Thus, the key issue is to analyze the solvability of the system \eqref{zeroforcing} with respect to $\{\mathbf{U}_{j',j},\mathbf{F}_k,\mathbf{V}_j\}$, which is the main focus of the rest of this paper.

\section{Achievable DoF of the Symmetric Multi-Relay MIMO Y Channel}
In general, to check the achievability of a certain DoF $d$, we need to jointly design the matrices $\{\mathbf{U}_{j,j'},\mathbf{F}_k,\mathbf{V}_j\}$ to meet \eqref{zeroforcing}. This is a challenging task since the equations in \eqref{zeroforcing} are nonlinear with respect to $\{\mathbf{U}_{j,j'},\mathbf{F}_k,\mathbf{V}_j\}$. To tackle this problem, we start with a conventional approach based on the idea of uplink-downlink symmetry.

\subsection{Conventional Approach with Uplink-Downlink Symmetry}
Uplink-downlink symmetry has been widely used in precoding design for MIMO mRCs \cite{capacity1,yuan,3,32,capacity2}. It is shown to be optimal for many single-relay MIMO mRCs \cite{capacity1,3, 32}, and efficient for some multi-relay MIMO mRCs \cite{capacity2}. In this subsection, we follow the idea of uplink-downlink symmetry to solve \eqref{zeroforcing}. Then, for an arbitrary configuration of $(M,N,K)$, we derive an achievable DoF (or an upper bound of the achievable DoF) of the symmetric multi-relay MIMO Y channel. We show that there is a significant DoF gap between this result and the DoF upper bound in \eqref{upperbound}, implying the inadequacy of the uplink-downlink symmetric precoding design for multi-relay MIMO mRCs.

To start with, we split each projection matrix $\mathbf{V}_j$ equally into two parts as
\begin{equation}
\mathbf{V}_j = \left[\mathbf{V}_{j+1,j}, \mathbf{V}_{j-1,j}\right]
\end{equation}
where $\mathbf{V}_{j',j}\in \mathbb{C}^{d\times M}$ is the projection matrix for the message $\mathbf{S}_{j',j}$. Then the interference neutralization conditions \eqref{zeroforcing_1}-\eqref{zeroforcing_2} can be rewritten as
\begin{equation}
\label{zeroforcing_symg}
\sum_{k=0}^{K-1}\mathbf{V}_{j',j}\mathbf{G}_{j,k}\mathbf{F}_k\mathbf{H}_{k,j'}\mathbf{U}_{j',j''}=\mathbf{0},\quad\!\! \forall j \not= j',\quad\!\!\! j'\not=j'',\quad\!\!\!j \not = j''.
\end{equation}
Note that for all the 3 users, there are 12 matrix equations in \eqref{zeroforcing_symg} for interference neutralization. Moreover, the rank requirements remain the same as \eqref{rank}.

We next establish $K$ DoF points, namely, $\left(\frac{M}{N}, d_\mathrm{sum}\right) = \left(\frac{2}{3},N\right)$ and $\left(\frac{6k+\sqrt{6k}}{12},\frac{\sqrt{6k}N}{2}\right)$, $k = 2,\cdots, K$, by following uplink-downlink symmetric design. The DoF point $\left(\frac{2}{3},N\right)$ is achievable, while the points $\left(\frac{6k+\sqrt{6k}}{12},\frac{\sqrt{6k}N}{2}\right)$, $k = 2,\cdots, K$, are just upper bounds. That is, for any $\frac{M}{N} \geq \frac{6k+\sqrt{6k}}{12}$, $k=2,\cdots,K$, the total DoF achieved by the uplink-downlink symmetric design is upper bounded as $d_\mathrm{sum} < \frac{\sqrt{6k}N}{2}$.

The first DoF point is derived by deactivating $K-1$ of the $K$ relays. In this case, the model reduces to a single-relay MIMO Y channel with the antenna number of the relay equal to $N$. Then, the precoding design in \cite{32} can be applied directly. From the result of \cite{32}, a total DoF of $\frac{3M}{2}$ can be achieved for half-duplex single-relay MIMO Y channel with $\frac{M}{N} = \frac{2}{3}$. That is, $\left(\frac{M}{N}, d_\mathrm{sum}\right)=\left(\frac{2}{3},N\right)$ is achievable. 

We now consider the DoF point $\left(\frac{6K+\sqrt{6K}}{12},\frac{\sqrt{6K}N}{2}\right)$. Note that the remaining DoF points can be straightforwardly obtained by deactivating $K-k$ relays, for $k=2,\cdots,K-1$. Following \cite{capacity2}, we apply signal alignment techniques for the design of $\{\mathbf{U}_{j,j'},\mathbf{V}_{j,j'}\}$ to reduce the number of linearly independent constraints in \eqref{zeroforcing_sym}. First consider the uplink signal alignment design. We align the signals exchanged by user $j$ and user $j'$ at each relay. That is, we design $\{\mathbf{U}_{j,j'}\}$ to satisfy
\begin{align}
\label{uplinkalign}
\mathbf{H}_{k,j}\mathbf{U}_{j,j'} = \mathbf{H}_{k,j'}\mathbf{U}_{j',j}, \quad j,j' = 0,1,2, \quad j\not =j', \quad \forall k.
\end{align}
Note that for the single-relay case, it usually suffices to align $\mathbf{H}_{k,j}\mathbf{U}_{j,j'}$ and $\mathbf{H}_{k,j'}\mathbf{U}_{j',j}$ in a common subspace. However, for the multi-relay case here, we rely on a more strict constraint \eqref{uplinkalign} for signal alignment, so as to reduce the number of linearly independent equations in \eqref{zeroforcing_sym}. We then consider the downlink signal alignment. From \eqref{zeroforcing_symg}, we see that the uplink equivalent channel matrix $\mathbf{H}_{k,j}\mathbf{U}_{j,j'}\in\mathbb{C}^{N\times d}$ is of the same size as the transpose of downlink equivalent channel matrix $\mathbf{V}_{j,j'}\mathbf{G}_{j',k} \in \mathbb{C}^{d \times N}$, for $\forall k,j,j'$. This structural symmetry implies that any beamforming design in the uplink phase directly carries over to the downlink phase. Specifically, we design the downlink receiving matrices $\{\mathbf{V}_{j,j'}\}$ to satisfy
\begin{align}
\label{downlinkalign}
\mathbf{V}_{j,j'}\mathbf{G}_{j',k} = \mathbf{V}_{j',j}\mathbf{G}_{j,k}, \quad j,j' = 0,1,2, \quad j\not =j', \quad \forall k.
\end{align}
From the rank-nullity theorem, to ensure the existence of full-rank $\{\mathbf{U}_{j,j'},\mathbf{V}_{j,j'}\}$ satisfying \eqref{uplinkalign} and \eqref{downlinkalign}, the following conditions must be met:
\begin{align}
\label{symmetry_align_condition}
2M-KN\geq d.
\end{align}
With \eqref{uplinkalign} and \eqref{downlinkalign}, the interference neutralization conditions in \eqref{zeroforcing_symg} reduces to 
\begin{equation}
\label{zeroforcing_sym}
\sum_{k=0}^{K-1}\mathbf{V}_{j',j}\mathbf{G}_{j,k}\mathbf{F}_k\mathbf{H}_{k,j+1}\mathbf{U}_{j+1,j+2}=\mathbf{0},\quad \!\! \quad j,j' = 0,1,2, \quad j\not =j'.
\end{equation}
Note that $\{\mathbf{U}_{j',j},\mathbf{V}_{j',j}\}$ are already fixed to meet \eqref{uplinkalign} and \eqref{downlinkalign}. Thus \eqref{zeroforcing_sym} is a linear system of $\{\mathbf{F}_k\}$ with $6d^2$ equations and $KN^2$ unknown variables. The system has a non-zero solution of $\{\mathbf{F}_{k}\}$ provided $d < \frac{\sqrt{6K}N}{6}$. Together with \eqref{symmetry_align_condition}, we obtain the DoF point $\left(\frac{M}{N}, d_\mathrm{sum}\right) = \left(\frac{6K+\sqrt{6K}}{12},\frac{\sqrt{6K}N}{2}\right)$. Note that by deactivating $K-k$ relays, we immediately obtain the remaining DoF points $\left(\frac{M}{N}, d_\mathrm{sum}\right) = \left(\frac{6k+\sqrt{6k}}{12},\frac{\sqrt{6k}N}{2}\right)$, for $k = 2,\cdots, K-1$.

We now apply the antenna disablement lemma \cite{DOF2} to the above $K$ DoF points, yielding a continuous DoF curve of uplink-downlink symmetric design for an arbitrary value of $\frac{M}{N}$:
\begin{equation}
\label{DoF_sym}
d_\mathrm{sum} = N\max_{(a,b)\in \mathcal{S}_K}g_{(a,b)}\left(\frac{M}{N}\right)
\end{equation}
where $\mathcal{S}_K = \left\{\left(\frac{2}{3},1\right)\right\} \cup \left\{\left( \frac{6k+\sqrt{6k}}{12},\frac{\sqrt{6k}}{2}\right) \Big| k = 2,\cdots, K\right\}$ and the $g$-function is defined as
\begin{equation}
g_{(a,b)}(x) = \begin{cases}
\frac{bx}{a}  &       x<a \\
b & x\geq a\\
\end{cases}.
\end{equation}

It is worth noting that the DoF in \eqref{DoF_sym} is not necessarily achievable. To verify its achievability, we still need to check whether the solution of $\{\mathbf{F}_{k}\}$ to \eqref{zeroforcing_sym} satisfies the rank requirements \eqref{rank}. In general, the DoF in \eqref{DoF_sym} serves as an upper bound for the DoF achieved by the uplink-downlink symmetric design described in this subsection. In fact, this bound is enough to illustrate the limitation of the uplink-downlink symmetric design. Specifically, we compare \eqref{DoF_sym} with the upper bound in \eqref{upperbound}. We see that for a sufficiently large $\frac{M}{N}$ (say, $\frac{M}{N} \geq \frac{6K+\sqrt{6K}}{12}$), $d_\mathrm{sum}$ in \eqref{DoF_sym} is proportional to $\sqrt{K}$ while the DoF upper-bound in
\eqref{upperbound} and the achievable DoF derived in this paper (Theorem \ref{Theorem1} in Section \ref{Main Result}) both increase linearly with $K$. This DoF gap implies the inefficiency of the uplink-downlink symmetric design.

\subsection{Main Result}
\label{Main Result}
We now propose a new and more efficient approach to the precoding design for the symmetric multi-relay MIMO Y channel, so as to achieve a higher DoF than the result in \eqref{DoF_sym}. The novelty of our approach is as follows. First, instead of following the conventional uplink-downlink symmetry, our approach performs signal alignment only in the uplink phase. This means that, in the design of user precoding and post-processing, only $\{\mathbf{U}_{j,j'}\}$ are designed to align signals to reduce the number of linearly independent equations in \eqref{zeroforcing_1}-\eqref{zeroforcing_2}; the post-processors $\{\mathbf{V}_j\}$ are used directly for the purpose of neutralizing interference, rather than designed by following the uplink-downlink symmetry. This introduces additional freedom for system design. Further, to exploit the advantage of this uplink-downlink asymmetric design, we allow a certain number of \textit{receiving} (but not transmitting) antennas at each relay to be deactivated. With this treatment, signal alignment can be performed in a larger range of $\frac{M}{N}$. With the above design of $\{\mathbf{U}_{j,j'}\}$ and $\{\mathbf{V}_{j}\}$, the problem reduces to the design of the relay precoders $\{\mathbf{F}_k\}$ that satisfy the linear system in \eqref{zeroforcing_1}-\eqref{zeroforcing_2} and the rank requirements in \eqref{rank}. We then establish a general method to address this solvability problem. We emphasis that this method can be used to analyze the solvability of any linear system with rank constraints, and so, will find applications to the DoF analysis of many other multiway relay networks.

The main result of this paper is presented below, with the proof given in Section IV.

\newtheorem{theorem}{Theorem}
\newtheorem{lemma}{Lemma}
\newtheorem{corollary}{Corollary}
\newtheorem{remark}{Remark}
\begin{theorem}
\label{Theorem1}
For the symmetric multi-relay MIMO Y channel, any total DoF $d_\mathrm{sum}<d^*_\mathrm{sum}$ is achievable, where
\begin{equation}
\label{achievableDoF0}
d^*_\mathrm{sum} = \mathrm{min}\left\{\frac{3M}{2}, \max\left\{M + \frac{5MN}{9M+N},\frac{\sqrt{3K}N}{2}\right\}, M + \frac{KN^2}{3M}, KN\right\}.
\end{equation}
\end{theorem}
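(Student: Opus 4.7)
The plan is to construct linear processors $\{\mathbf{U}_{j,j'}, \mathbf{V}_j, \mathbf{F}_k\}$ satisfying \eqref{zeroforcing_1}--\eqref{rank} at a discrete set of corner $(M/N, d_{\mathrm{sum}})$ points, and then piece these corners into the full envelope \eqref{achievableDoF0} by antenna disablement and symbol extension. The two terms $\tfrac{3M}{2}$ and $KN$ in \eqref{achievableDoF0} already coincide with the cooperative upper bound \eqref{upperbound}, so only the middle expressions require a construction.

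The construction I would follow has three stages. \textbf{Uplink alignment with receive-antenna disablement.} I disable $N-N'$ of the $N$ receive antennas at each relay and impose the strict alignment condition $\mathbf{H}_{k,j}\mathbf{U}_{j,j'}=\mathbf{H}_{k,j'}\mathbf{U}_{j',j}$ for every $k$. Rank-nullity guarantees full-column-rank $\{\mathbf{U}_{j,j'}\}$ generically provided $2M-KN'\ge d$, so choosing $N'<N$ buys alignment feasibility when $M$ is small relative to $KN$. \textbf{Asymmetric post-processing.} Rather than enforcing the mirror condition \eqref{downlinkalign}, I leave each $\mathbf{V}_j$ free and use it directly for interference neutralization. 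After vectorizing \eqref{zeroforcing_1}--\eqref{zeroforcing_2} via the Kronecker identity $\mathrm{vec}(\mathbf{A}\mathbf{X}\mathbf{B})=(\mathbf{B}^T\otimes\mathbf{A})\mathrm{vec}(\mathbf{X})$, the problem becomes a linear system whose unknowns are $(\{\mathbf{F}_k\},\{\mathbf{V}_j\})$. \textbf{Solvability and rank verification.} I would then prove a general solvability lemma for equations of the form $\sum_k \mathbf{A}_k \mathbf{F}_k \mathbf{B}_k = \mathbf{0}$ (or its $\mathbf{V}$-augmented version) asserting that the null space is full-dimensional whenever the scalar equation count is strictly less than the scalar unknown count, and separately show that the rank condition \eqref{rank} fails only on a proper algebraic subvariety of the channel space, so it holds almost surely.

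The principal obstacle is the solvability stage. With $K$ coupled relay precoders whose coefficient matrices inherit structure from the aligned $\{\mathbf{U}_{j,j'}\}$, a naive equations-versus-unknowns count is inadequate; my plan is to stack the vectorizations $\mathrm{vec}(\mathbf{F}_0),\ldots,\mathrm{vec}(\mathbf{F}_{K-1})$ into one long unknown vector and prove, by a Schwartz--Zippel argument on the generic channel entries, that the stacked coefficient matrix attains the maximum rank allowed by its block structure. Having secured achievability at each corner, I would optimize the free parameter $N'$ so that the alignment budget $2M-KN'\ge d$ balances against the solvability budget for $\{\mathbf{F}_k\}$; the three regimes in which one or the other budget binds yield the three middle expressions $M+\tfrac{5MN}{9M+N}$, $\tfrac{\sqrt{3K}N}{2}$, and $M+\tfrac{KN^2}{3M}$. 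Antenna disablement \cite{DOF2} and symbol extension extend the corners to a continuous curve, and the upper envelope, clipped by the bounds $\tfrac{3M}{2}$ and $KN$, reproduces \eqref{achievableDoF0}; the $\max\{\cdot,\cdot\}$ inside the $\min$ reflects that two of the three constructions dominate each other on complementary ranges of $M/N$.
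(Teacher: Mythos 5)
Your overall architecture---uplink-only alignment, receive-antenna disablement at the relays, an asymmetric free $\mathbf{V}_j$, a generic-rank argument for solvability, and antenna disablement plus symbol extension to fill in the envelope---matches the paper's. But there is a concrete gap at the heart of your solvability stage: the neutralization conditions \eqref{zeroforcing_1}--\eqref{zeroforcing_2} contain products $\mathbf{V}_j\mathbf{G}_{j,k}\mathbf{F}_k\cdots$, so they are \emph{bilinear}, not linear, in the joint unknowns $(\{\mathbf{F}_k\},\{\mathbf{V}_j\})$; the identity $\mathrm{vec}(\mathbf{A}\mathbf{X}\mathbf{B})=(\mathbf{B}^T\otimes\mathbf{A})\mathrm{vec}(\mathbf{X})$ linearizes in one matrix variable only. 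A single stacked linear system with both $\{\mathbf{F}_k\}$ and $\{\mathbf{V}_j\}$ as unknowns does not exist, and an equations-versus-unknowns count on it is meaningless. The paper resolves this by splitting each aligned beam into $\mathbf{U}^{(L)}$ of width $d'=3d-M$ and $\mathbf{U}^{(R)}$ of width $d-d'$ (see \eqref{d'}): the relay precoders alone must fully neutralize the $\mathbf{U}^{(L)}$ part, which is a genuinely linear system in $\{\mathbf{F}_k\}$ with $3d'M$ equations and $KN^2$ (or $KNN'$) unknowns, as in \eqref{pneutralize}; the residual $(d-d')$-dimensional interference is then killed by choosing $\mathbf{V}_j$ in the left null space of an $M\times(d-d')$ matrix, which always exists because $M-(d-d')=2d$, as in \eqref{constructV}. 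It is precisely this $d'$-split, combined with the choice of $N'$ in \eqref{N'}, that produces the expressions $M+\frac{KN^2}{3M}$ and $M+\frac{5MN}{9M+N}$; tuning $N'$ alone does not generate them, and the third term $\frac{\sqrt{3K}N}{2}$ comes from a qualitatively different scheme with no alignment at all and $\mathbf{V}_j=\mathbf{I}_M$ as in \eqref{UV}, not from the same parametric family.

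A second, smaller gap is in your Schwartz--Zippel step. The rank condition \eqref{rank} must hold for some $\{\mathbf{F}_k\}$ lying \emph{inside} the channel-dependent null space of the neutralization system, so it is not enough to argue that the bad set is a subvariety of channel space; one must show that the relevant determinant polynomial, restricted to that solution variety, is not identically zero. This requires exhibiting an explicit witness---the paper constructs a specific channel realization $\widehat{\mathbf{T}}$ with $\widehat{\mathbf{F}}_k=\mathbf{I}_N$ and channels engineered to satisfy the neutralization by hand---and that construction is the bulk of the appendix, not a routine genericity remark.
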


\begin{corollary}
\label{Corollary2}
For $K=1$, any total DoF $d_\mathrm{sum}<d^*_\mathrm{sum}$ is achievable, where
\begin{equation}
\label{MIMOYcapacity}
d^*_\mathrm{sum} = \mathrm{min}\left\{\frac{3M}{2}, N\right\}.
\end{equation}
\end{corollary}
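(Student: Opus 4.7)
The plan is to specialize the achievable DoF expression in Theorem~\ref{Theorem1} to $K=1$ and show that the two ``inner'' quantities inside the outer minimum are always dominated, so that the minimum collapses to $\min\{\tfrac{3M}{2},\, N\}$. Substituting $K=1$ into \eqref{achievableDoF0} yields
\[
d^*_\mathrm{sum} = \min\!\left\{\tfrac{3M}{2},\ \max\!\left\{M + \tfrac{5MN}{9M+N},\ \tfrac{\sqrt{3}\,N}{2}\right\},\ M + \tfrac{N^2}{3M},\ N\right\},
\]
so it suffices to prove that $M + \tfrac{N^2}{3M}$ and the inner $\max$ are each at least $\min\{\tfrac{3M}{2},\, N\}$ for every $(M,N)$. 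I would partition the parameter space by the ratio $r := M/N$ and check the two middle terms in each regime.

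In the regime $r \leq 2/3$, where $\min\{\tfrac{3M}{2},\, N\} = \tfrac{3M}{2}$, the inequality $M + \tfrac{N^2}{3M} \geq \tfrac{3M}{2}$ rearranges to $2N^2 \geq 3M^2$, which follows immediately from $N \geq \tfrac{3M}{2}$. For the inner $\max$, its first branch satisfies $M + \tfrac{5MN}{9M+N} \geq \tfrac{3M}{2}$ iff $N \geq M$, which again holds since $r < 1$. In the regime $r > 2/3$, where $\min\{\tfrac{3M}{2},\, N\} = N$, the bound $M + \tfrac{N^2}{3M} \geq N$ is equivalent to $3r^2 - 3r + 1 \geq 0$, a quadratic with negative discriminant, hence nonnegative for all real $r$. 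For the inner $\max$, $M + \tfrac{5MN}{9M+N} \geq N$ reduces to $9r^2 - 3r - 1 \geq 0$, whose positive root is $\tfrac{1+\sqrt{5}}{6} \approx 0.539$, so the inequality holds throughout $r > 2/3$.

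There is no genuine obstacle: the corollary is a mechanical specialization of Theorem~\ref{Theorem1}. The only point that needs mild attention is in the regime $r > 2/3$, where the $\tfrac{\sqrt{3}N}{2}$ branch of the inner $\max$ alone does not exceed $N$ (since $\sqrt{3}/2 < 1$); one must instead invoke the $M + \tfrac{5MN}{9M+N}$ branch to dominate $N$. Finally, the DoF upper bound in \eqref{upperbound} specializes at $K=1$ to exactly $\min\{\tfrac{3M}{2},\, N\}$, so the corollary simultaneously gives the tight optimal DoF, matching the known half-duplex single-relay MIMO Y channel result of \cite{32}.
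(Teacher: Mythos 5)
Your proposal is correct and follows the same route as the paper, which simply states that the corollary follows from Theorem~\ref{Theorem1} by setting $K=1$. The only difference is that you explicitly verify the algebra showing that $M+\tfrac{N^2}{3M}$ and $\max\bigl\{M+\tfrac{5MN}{9M+N},\tfrac{\sqrt{3}N}{2}\bigr\}$ are always dominated by $\min\bigl\{\tfrac{3M}{2},N\bigr\}$ --- a check the paper leaves implicit --- and your case analysis in $r=M/N$ is accurate.
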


\begin{proof}
By setting $K = 1$, Corollary 1 follows directly from Theorem 1.
\end{proof}

\begin{remark}
\normalfont
For $K=1$, the considered model reduces to signal-relay MIMO Y channel. The achievable DoF in Corollary \ref{Corollary2} is exactly the optimal DoF derived in \cite{32}. Hence, our result in Theorem \ref{Theorem1} subsumes the result in \cite{32} and generalizes it for the case with multiple relays.
\end{remark}

For $K\geq 2$, the result in Theorem 1 is given piecewisely in the following corollaries.

\begin{corollary}
For $K \geq 2$, any total DoF $d_\mathrm{sum}<d^*_\mathrm{sum}$ is achievable, where
\begin{equation}
\label{achievableDoF}
d^*_\mathrm{sum} =
\begin{cases}
\frac{3M}{2}  &        {\frac{M}{N} \in \Big[0,\max\left\{\frac{\sqrt{3K}}{3},1\right\}\Big)}\\
\max\left\{M + \frac{5MN}{9M+N},\frac{\sqrt{3K}N}{2}\right\}      & {\frac{M}{N} \in \Big[\max\left\{\frac{\sqrt{3K}}{3},1\right\}, \frac{9K+\sqrt{81K^2+60K}}{30}\Big)}  \\
M + \frac{KN^2}{3M}     & {\frac{M}{N} \in \Big[\frac{9K+\sqrt{81K^2+60K}}{30},\frac{3K+\sqrt{9K^2-12K}}{6}\Big)} \\
KN  &      {\frac{M}{N} \in \Big[\frac{3K+\sqrt{9K^2-12K}}{6},\infty\Big)}.
\end{cases}
\end{equation}
\end{corollary}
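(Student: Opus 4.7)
The plan is to deduce this corollary as a direct case analysis of Theorem \ref{Theorem1}: since its achievable DoF is a minimum of four quantities, the piecewise formula in \eqref{achievableDoF} is obtained by determining which term attains the minimum for each range of the ratio $x := M/N$. Normalizing by $N$, the four candidates become
\[
f_1(x) = \tfrac{3x}{2}, \qquad f_2(x) = \max\!\left\{x + \tfrac{5x}{9x+1},\; \tfrac{\sqrt{3K}}{2}\right\}, \qquad f_3(x) = x + \tfrac{K}{3x}, \qquad f_4(x) = K.
\]

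First I would locate the three crossover points. Solving $f_1 = f_2$ splits into two subcases: $\tfrac{3x}{2} = x + \tfrac{5x}{9x+1}$ gives $x = 1$, while $\tfrac{3x}{2} = \tfrac{\sqrt{3K}}{2}$ gives $x = \tfrac{\sqrt{3K}}{3}$, so the $f_1$-to-$f_2$ transition sits at $x_1 := \max\{1, \tfrac{\sqrt{3K}}{3}\}$. Equating the first branch of $f_2$ with $f_3$, namely $\tfrac{5x}{9x+1} = \tfrac{K}{3x}$, reduces to $15x^2 - 9Kx - K = 0$, whose positive root is $x_2 := \tfrac{9K + \sqrt{81K^2 + 60K}}{30}$. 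Lastly, $f_3 = f_4$ yields $3x^2 - 3Kx + K = 0$, whose larger root is $x_3 := \tfrac{3K + \sqrt{9K^2 - 12K}}{6}$, which is real precisely when $K \geq 2$, matching the hypothesis of the corollary.

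The main subtlety is the inner $\max$ in $f_2$: the expression in \eqref{achievableDoF} keeps this $\max$ throughout $[x_1, x_2)$, so I must confirm that the $f_2$-to-$f_3$ crossing is really governed by the first branch $M + \tfrac{5MN}{9M+N}$ and not by the constant $\tfrac{\sqrt{3K}N}{2}$. By AM--GM, $f_3(x) \geq 2\sqrt{K/3} = \tfrac{2\sqrt{3K}}{3} > \tfrac{\sqrt{3K}}{2}$, so at $x = x_2$ (where the first branch of $f_2$ equals $f_3$) that branch strictly exceeds the constant one, fixing the crossover location as claimed. The remaining checks are routine monotonicity: $f_1$ is linearly increasing, $f_3$ is strictly convex with its unique minimum at $\sqrt{K/3}$, which for $K \geq 2$ lies strictly below $x_2$, so $f_3$ is increasing on $[x_2, x_3]$; likewise at $x_1$ one verifies $f_1(x_1) \leq f_3(x_1)$ using the same AM--GM bound, so $f_1$ dominates the minimum to its left. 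Assembling these endpoint comparisons across $[0,x_1)$, $[x_1,x_2)$, $[x_2,x_3)$, and $[x_3,\infty)$ shows that $N \cdot \min\{f_1,f_2,f_3,f_4\}$ equals the piecewise expression in \eqref{achievableDoF}, and the corollary then follows immediately from Theorem \ref{Theorem1}.
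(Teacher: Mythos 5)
Your proposal is correct and takes the same route as the paper: the paper's proof of this corollary is the one-line observation that the piecewise formula is just Theorem~1's four-way minimum resolved over the ranges of $M/N$, and your crossover points $x_1$, $x_2$, $x_3$ and the AM--GM argument handling the inner $\max$ are exactly the (omitted) algebra behind that resolution. The endpoint comparisons you sketch all check out for $K\geq 2$, so nothing is missing.
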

\begin{proof}
By setting $K\geq2$, the result in \eqref{achievableDoF} follows from Theorem 1.
\end{proof}

\begin{remark}
\normalfont
We see that for $K \geq 2$, the achievable total DoF in Theorem \ref{Theorem1} is given by the minimum of four terms: $\frac{3M}{2}$, $\max\left\{M + \frac{5MN}{9M+N},\frac{\sqrt{3K}N}{2}\right\}$, $M + \frac{KN^2}{3M}$, and $KN$. Note that the DoF $M + \frac{5MN}{9M+N}$ and $M + \frac{KN^2}{3M}$ are achieved by two different signal alignment approaches, while the DoF $\frac{\sqrt{3K}N}{2}$ is achieved without signal alignment. The achievability of the DoF $\frac{3M}{2}$ and $KN$ is derived by the technique of antenna disablement.
\end{remark}

\begin{corollary}
For $K\geq 2$ and $\frac{M}{N} \in \Big[0,\max\left\{\frac{\sqrt{3K}}{3},1\right\} \Big)\cup \Big[\frac{3K+\sqrt{9K^2-12K}}{6},\infty\Big)$, the optimal total DoF of the symmetric multi-relay MIMO Y channel is given by
\begin{equation}
\label{optimalDoF}
d^\mathrm{opt}_\mathrm{sum} =
\mathrm{min}\left\{\frac{3M}{2},KN\right\}.
\end{equation}
\end{corollary}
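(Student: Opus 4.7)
The plan is to combine the DoF upper bound in \eqref{upperbound} with the achievable DoF already given by the preceding corollary (the piecewise specialization of Theorem \ref{Theorem1} to $K \geq 2$), and verify by direct algebraic comparison that the two quantities coincide on the two subranges of $M/N$ stated in the corollary.

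First I would recall that \eqref{upperbound} gives the universal outer bound $d_{\mathrm{sum}} \leq \min\{3M/2, KN\}$, obtained by granting full cooperation to the $K$ relays and invoking the single-relay MIMO Y channel DoF result of \cite{32}. Hence it suffices to show that on each of the two stated subranges the piecewise achievable value $d^*_{\mathrm{sum}}$ of Theorem \ref{Theorem1} coincides with this outer bound, which reduces the corollary to a matching check.

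For the first subrange $\frac{M}{N} \in \bigl[0, \max\{\sqrt{3K}/3, 1\}\bigr)$, Theorem \ref{Theorem1} yields the achievable value $d^*_{\mathrm{sum}} = 3M/2$. To confirm this matches the outer bound I need only verify that $3M/2 \leq KN$ throughout the range, equivalently $\max\{\sqrt{3K}/3, 1\} \leq 2K/3$. For $K \geq 2$ both inequalities are immediate: $\sqrt{3K}/3 \leq 2K/3$ reduces to $3 \leq 4K$, and $1 \leq 2K/3$ reduces to $K \geq 3/2$. For the second subrange $\frac{M}{N} \in \bigl[(3K+\sqrt{9K^2-12K})/6, \infty\bigr)$, Theorem \ref{Theorem1} yields $d^*_{\mathrm{sum}} = KN$, and I would verify $KN \leq 3M/2$ throughout the range, i.e., $(3K+\sqrt{9K^2-12K})/6 \geq 2K/3$; this rearranges to $\sqrt{9K^2-12K} \geq K$, or $8K^2 \geq 12K$, which holds for $K \geq 2$.

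I expect no genuine obstacle in this corollary: the heavy lifting resides in Theorem \ref{Theorem1} (achievability via the uplink-downlink asymmetric construction together with antenna disablement and symbol extension) and in \eqref{upperbound} (converse via full relay cooperation). The corollary itself is simply the routine verification identifying the two intervals on which the four-piece achievable curve from Theorem \ref{Theorem1} saturates the $\min\{3M/2, KN\}$ envelope; on the intermediate interval the two middle pieces $\max\{M + 5MN/(9M+N), \sqrt{3K}N/2\}$ and $M + KN^2/(3M)$ lie strictly below this envelope, which is precisely why the optimality claim is confined to the two extreme subranges.
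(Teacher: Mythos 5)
Your proposal is correct and follows the same route as the paper: the paper's proof of this corollary is a one-line observation that the achievable DoF of Corollary 2 coincides with the full-cooperation upper bound \eqref{upperbound} on the two stated subranges. You merely make explicit the algebraic checks ($\max\{\sqrt{3K}/3,1\}\leq 2K/3$ and $(3K+\sqrt{9K^2-12K})/6\geq 2K/3$ for $K\geq 2$) that the paper leaves implicit, and both checks are carried out correctly.
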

\begin{proof}
For $\frac{M}{N} \in \Big[0,\max\left\{\frac{\sqrt{3K}}{3},1\right\} \Big) \cup \Big[\frac{3K+\sqrt{9K^2-12K}}{6},\infty\Big)$, the achievable total DoF in Corollary 2 coincides with the upper bound \eqref{upperbound}, and therefore the optimal total DoF is achieved.
\end{proof}

\begin{remark} 
\normalfont For $\frac{M}{N} \in \Big(\max\left\{\frac{\sqrt{3K}}{3},1\right\}, \frac{3K+\sqrt{9K^2-12K}}{6}\Big)$ (with $K\geq 2$), our achievable total DoF does not match the upper bound in \eqref{upperbound}. We conjecture that the upper bound obtained by assuming full cooperation among the relays is loose in general. Tighter DoF upper bounds can be derived by characterizing the DoF degradation due to distributed relaying. This will be an interesting topic for future research. 
\end{remark}

Fig. \ref{Fig:21} and Fig. \ref{Fig:22} illustrate the achievable total DoF in Corollary 2 with $K=2$ and $K=5$, respectively, together with the DoF in \eqref{DoF_sym} and the DoF upper bound in \eqref{upperbound}. We see that our approach significantly outperforms the conventional approach based on the uplink-downlink symmetric design. Particularly, as $K \rightarrow \infty$, the corresponding gap increases unboundedly. For $\frac{M}{N} \in \left(0,\max\left\{\frac{\sqrt{3K}}{3},1\right\}\right) \cup \left(\frac{3K+\sqrt{9K^2-12K}}{6}, \infty \right)$, our achievable total DoF coincides with the DoF upper bound given in \eqref{upperbound}, indicating that the optimal total DoF of the considered model is achieved. However, for the uplink-downlink symmetric design, the optimal total DoF is achieved only for $\frac{M}{N} \in \left(0,\frac{2}{3}\right]$. We also see that for sufficiently large $\frac{M}{N}$, there is a constant gap between the achievable total DoF of conventional approach and that of our approach, implying that the uplink-downlink symmetric approach can not make full use of all the relay antennas.

\begin{figure}
\setlength{\abovecaptionskip}{-0.1cm}
  \centering
  \includegraphics[trim={0 1cm 0 0}, width=10cm]{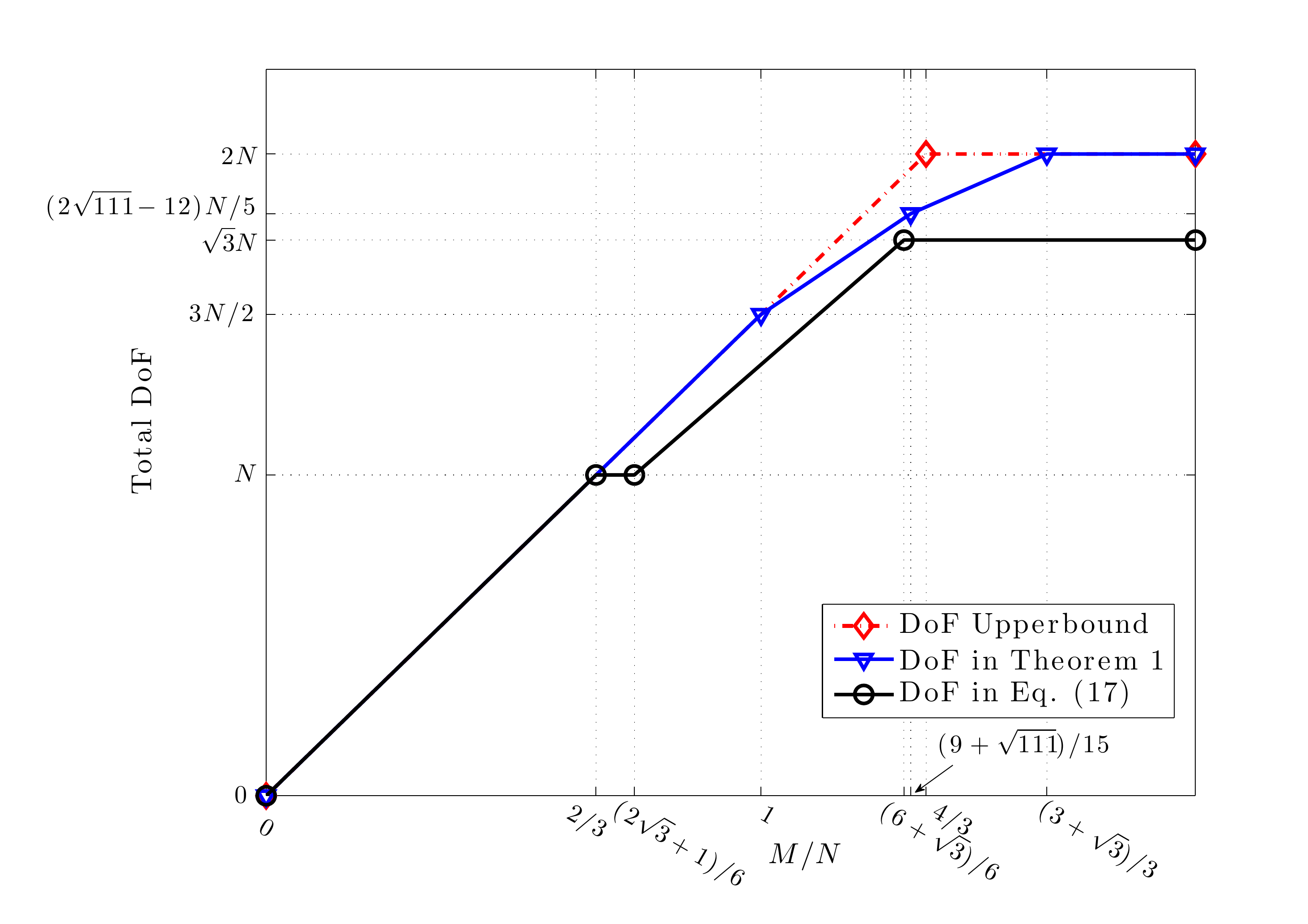}
  \caption{Achievable total DoF of the symmetric multi-relay MIMO Y channel for $K=2$.}\label{Fig:21}
\end{figure}

We now study the asymptotic behavior of the achievable total DoF as $K$ goes to infinity. It is readily seen from Corollary 2 that, for large $\frac{M}{N}$ and any fixed $N$, the achievable total DoF increases linearly in $K$. What is more interesting is to understand the asymptotic DoF behavior by studying the achievable total DoF normalized by the total number of relay antennas, i.e., $\frac{d_\mathrm{sum}}{N_\mathrm{total}}$, where $N_\mathrm{total} = KN$. This limiting process illustrates how the degree of cooperation among relay antennas affects the DoF of the system. We have the following asymptotic bound.

\begin{corollary}
For the symmetric multi-relay MIMO Y channel, as $K$ and $M$ go to infinity at the same rate, the asymptotic bound of $\frac{d_\mathrm{sum}}{N_\mathrm{total}}$ is given by 
\begin{equation}
\label{asymptotic}
\frac{d^*_\mathrm{sum}}{N_\mathrm{total}} = \mathrm{min}\left\{\frac{M}{N_\mathrm{total}} , 1\right\}.
\end{equation}
\end{corollary}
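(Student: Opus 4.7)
The plan is to take the piecewise expression for $d^*_\mathrm{sum}$ furnished by Corollary~2 and evaluate, case by case, the limit of $d^*_\mathrm{sum}/N_\mathrm{total}$ as $K\to\infty$ with $M/K\to\alpha$ for some constant $\alpha>0$ and $N$ held fixed. Under this scaling the key ratio stabilises: $M/N_\mathrm{total}=M/(KN)\to\alpha/N$. The strategy is therefore to identify, in terms of $\alpha/N$, which branch of the piecewise formula governs the behaviour of $d^*_\mathrm{sum}$ for all sufficiently large $K$, and then to compute the limit of that branch divided by $KN$.

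First I would dispose of the two extreme branches. Since $M/N=\alpha K/N$ grows linearly in $K$ whereas $\max\{\sqrt{3K}/3,1\}$ grows only like $\sqrt{K}$, the ``small $M/N$'' branch $d^*_\mathrm{sum}=3M/2$ is never active for large $K$. At the other end, the threshold $(3K+\sqrt{9K^2-12K})/6$ behaves like $K-1+o(1)$, so the ``large $M/N$'' branch $d^*_\mathrm{sum}=KN$ is active precisely when $\alpha/N\geq 1$ (modulo a vanishing correction), and there $d^*_\mathrm{sum}/N_\mathrm{total}\to 1$, which matches the ``$1$'' in the asserted minimum.

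Next I would handle the two middle branches, which together cover $\alpha/N<1$. The intermediate threshold $(9K+\sqrt{81K^2+60K})/30$ is asymptotic to $3K/5$, so the middle branches cover $M/N\in[\,\Theta(\sqrt K)\,,\,K-o(K)\,]$, i.e.\ $\alpha/N\in(0,1)$. For the second branch I would note that, for $M$ growing linearly in $K$, $M+\tfrac{5MN}{9M+N}=M+\tfrac{5N}{9}+o(1)$, while $\tfrac{\sqrt{3K}N}{2}=O(\sqrt K)$; dividing the maximum of the two by $N_\mathrm{total}=KN$ gives $\alpha/N+o(1)=M/N_\mathrm{total}+o(1)$. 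For the third branch, $M+\tfrac{KN^2}{3M}=M+\tfrac{N^2}{3\alpha}+o(1)$, and again division by $KN$ yields $\alpha/N+o(1)$. Thus on the whole range $\alpha/N\in(0,1)$ we obtain $d^*_\mathrm{sum}/N_\mathrm{total}\to M/N_\mathrm{total}$, matching the first argument of the minimum. Combining with the previous paragraph gives $\min\{M/N_\mathrm{total},1\}$ as claimed.

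The computations are essentially routine asymptotic expansions; the only subtlety I would watch carefully is the bookkeeping at the branch boundaries, specifically verifying that the thresholds $\max\{\sqrt{3K}/3,1\}$, $(9K+\sqrt{81K^2+60K})/30$, and $(3K+\sqrt{9K^2-12K})/6$ have the clean asymptotic forms $\Theta(\sqrt K)$, $(3/5)K+o(K)$, and $K-o(K)$ respectively, so that the active branch is correctly identified in terms of the limit $\alpha/N$. This ensures no intermediate regime is overlooked and that the sub-leading terms (such as the constant $5N/9$ or the $O(1/K)$ term $N/(3\alpha K)$) vanish upon normalisation by $N_\mathrm{total}=KN$.
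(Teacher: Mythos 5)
Your proposal is correct and takes essentially the same route as the paper: the paper's proof likewise divides the piecewise formula of Corollary~2 by $N_\mathrm{total}=KN$ and lets $K,M\to\infty$ at the same rate, though it merely states the normalized piecewise expression and asserts the limit, whereas you spell out the branch-by-branch asymptotics. One harmless slip: $\frac{3K+\sqrt{9K^2-12K}}{6}=K-\tfrac{1}{3}+o(1)$ rather than $K-1+o(1)$, but since you only use the form $K-o(K)$ nothing in the argument is affected.
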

\begin{proof}
By dividing both sides of \eqref{achievableDoF} by $N_\mathrm{total} = KN$, we obtain $\frac{d^*_\mathrm{sum}}{N_\mathrm{total}}$ with respect to $\frac{M}{N_\mathrm{total}}$:
\begin{equation}
\label{asymptoticproof}
\frac{d^*_\mathrm{sum}}{N_\mathrm{total}} =
\begin{cases}
\frac{3M}{2N_\mathrm{total}}  &        {\frac{M}{N_\mathrm{total}} \in \Big[0,\max \left\{\frac{\sqrt{3K}}{3K},\frac{1}{K}\right\}\Big)}\\
\max\left\{\frac{M}{N_\mathrm{total}} + \frac{5M}{K(9M+N)},\frac{\sqrt{3K}}{2K}\right\}      & {\frac{M}{N_\mathrm{total}} \in \Big[\max\left\{\frac{\sqrt{3K}}{3K},\frac{1}{K}\right\},\frac{9K+\sqrt{81K^2+60K}}{30K}\Big)}\\
\frac{M}{N_\mathrm{total}} + \frac{N}{3M}     & {\frac{M}{N_\mathrm{total}} \in \Big[\frac{9K+\sqrt{81K^2+60K}}{30K},\frac{3K+\sqrt{9K^2-12K}}{6K}\Big)} \\
1  &      {\frac{M}{N_\mathrm{total}} \in \Big[\frac{3K+\sqrt{9K^2-12K}}{6},\infty\Big)}.
\end{cases}
\end{equation}
Letting $K\rightarrow \infty$ and $M\rightarrow \infty$ at the same rate, we obtain \eqref{asymptotic}.
\end{proof}

\begin{figure}
\setlength{\abovecaptionskip}{-0.1cm}
 \centering
  \includegraphics[trim={0 1cm 0 0},width=10cm]{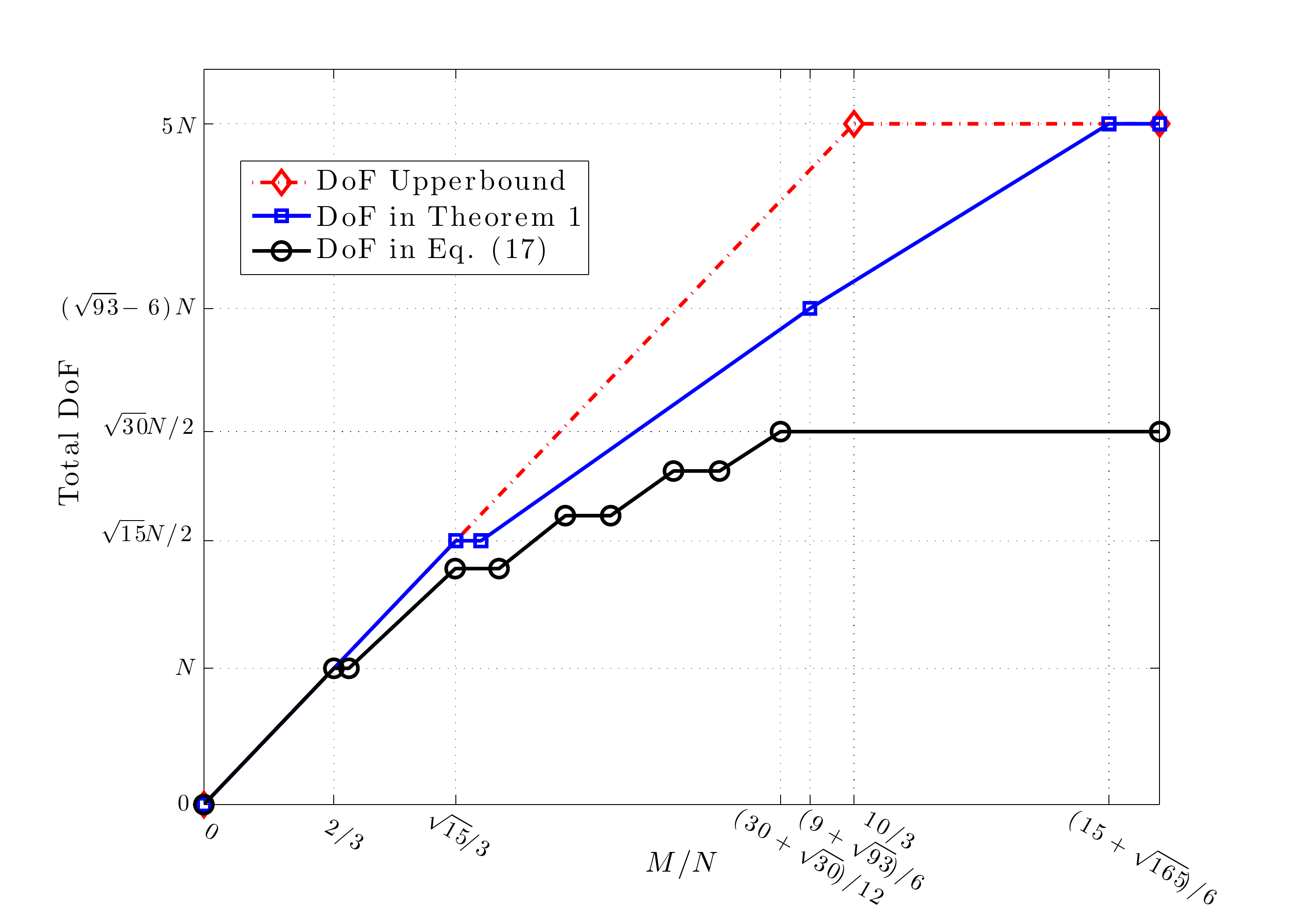}
  \caption{Achievable total DoF of the symmetric multi-relay MIMO Y channel for $K=5$.}\label{Fig:22}
\end{figure}

In Fig. \ref{Fig:3}, we compare the achievable total DoF (normalized by $N_\mathrm{total}$) with respect to $\frac{M}{N_\mathrm{total}}$ for different values of $K$. Our target is to see the impact on the achievable total DoF when a single $KN$-antenna relay is split into $K$ relay nodes. Fig. \ref{Fig:3} illustrates the normalized achievable DoF for $K=1,2,5$ and $10$. Note that the case of $K=1$ assumes that all the $KN$ antennas at the relay can cooperate, and so serves as an upper bound of the normalized achievable DoF. From Fig. \ref{Fig:3}, we see that as $K$ increases, for a given $\frac{M}{N_\mathrm{total}}$, the normalized achievable DoF monotonically decreases and approaches the asymptotic bound in Corollary 4. 

Fig. \ref{Fig:4} plots the achievable total DoF versus the number of relays for various values of $\frac{M}{N}$. Clearly, $d_\mathrm{sum}$ cannot exceed $\frac{3M}{2}$, the case in which all the user antennas are fully utilized. So, for a fixed value of $\frac{M}{N}$, it is interesting to explore how the achievable DoF increases with respect to $K$ and the number of relay nodes required to achieve $d_\mathrm{sum} = \frac{3M}{2}$. From Fig. \ref{Fig:4}, we see that the achievable total DoF is roughly a piecewise linear function with respect to $K$. The minimum numbers of relays required to achieve $d_\mathrm{sum} = \frac{3M}{2}$ are $2,12,19,27$ for $\frac{M}{N}=1,2,\frac{5}{2},3$, respectively. Fig. \ref{Fig:4} also shows that, if a certain amount of inefficiency can be tolerated in utilizing user antennas, then the required number of relays can be considerably reduced. For example, only 5 relays are required to achieve $d_\mathrm{sum}=\frac{7N}{2} = \frac{7}{6}M$ for $\frac{M}{N}=3$, while the number is 27 in achieving $d_\mathrm{sum}=\frac{3M}{2}$. It is also worth noting that there are transitional flat DoF curves for the cases of $\frac{M}{N}=2,\frac{5}{2}$, and 3. This flatness corresponds to the result $d_\mathrm{sum}^*=M+\frac{5MN}{9M+N}$ (that is independent of $K$).

\begin{figure}
  \centering
  \setlength{\abovecaptionskip}{-0.1cm}
  \includegraphics[trim={0 1cm 0 1cm},width=10cm]{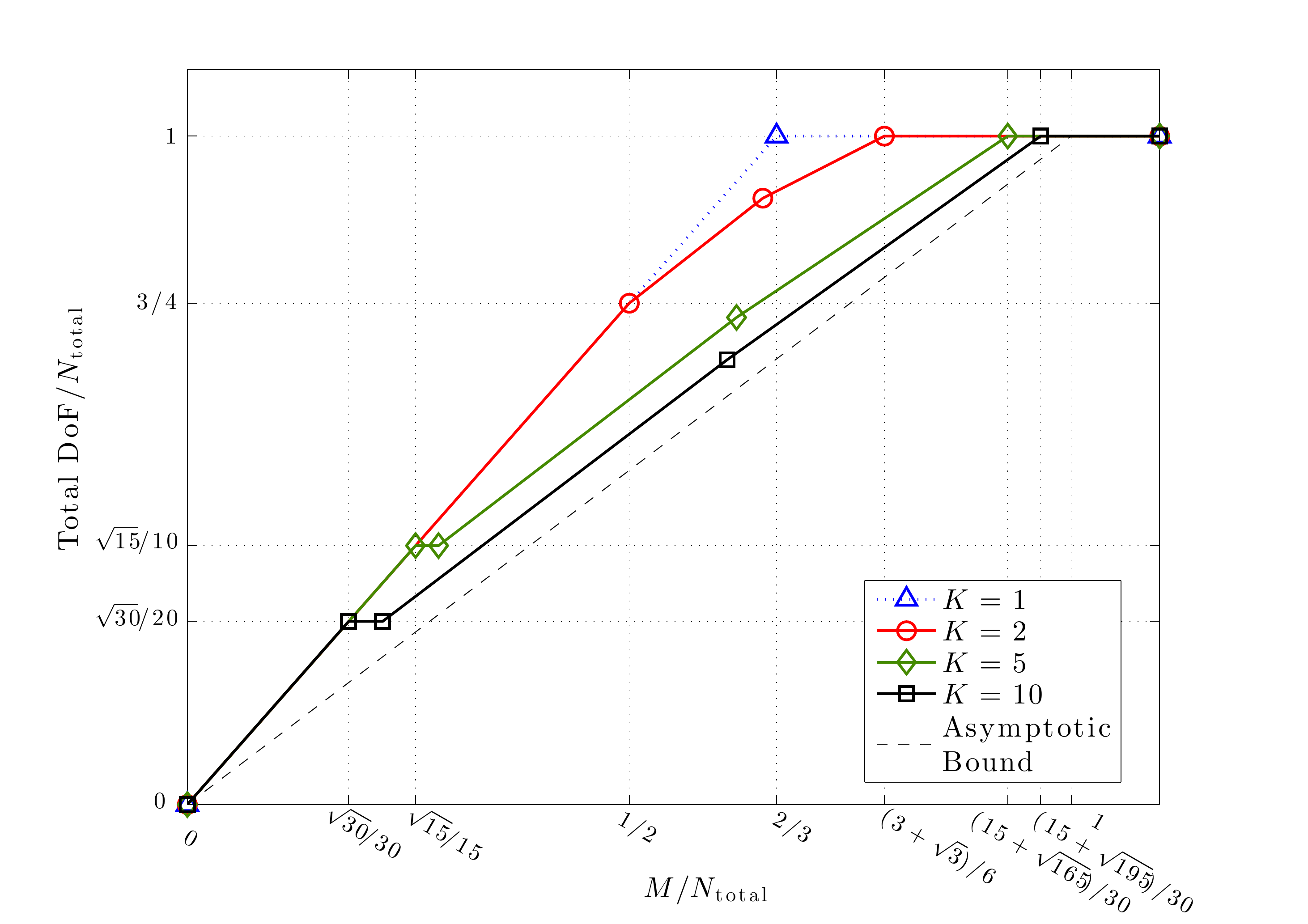}
  \caption{The achievable total DoF against $\frac{M}{N_\mathrm{total}}$, for $K=2,5,$ and $10$.}\label{Fig:3}
\end{figure}

\begin{figure}
  \centering
  \setlength{\abovecaptionskip}{-0.1cm}
  \includegraphics[trim={0 1cm 0 0},width=10cm]{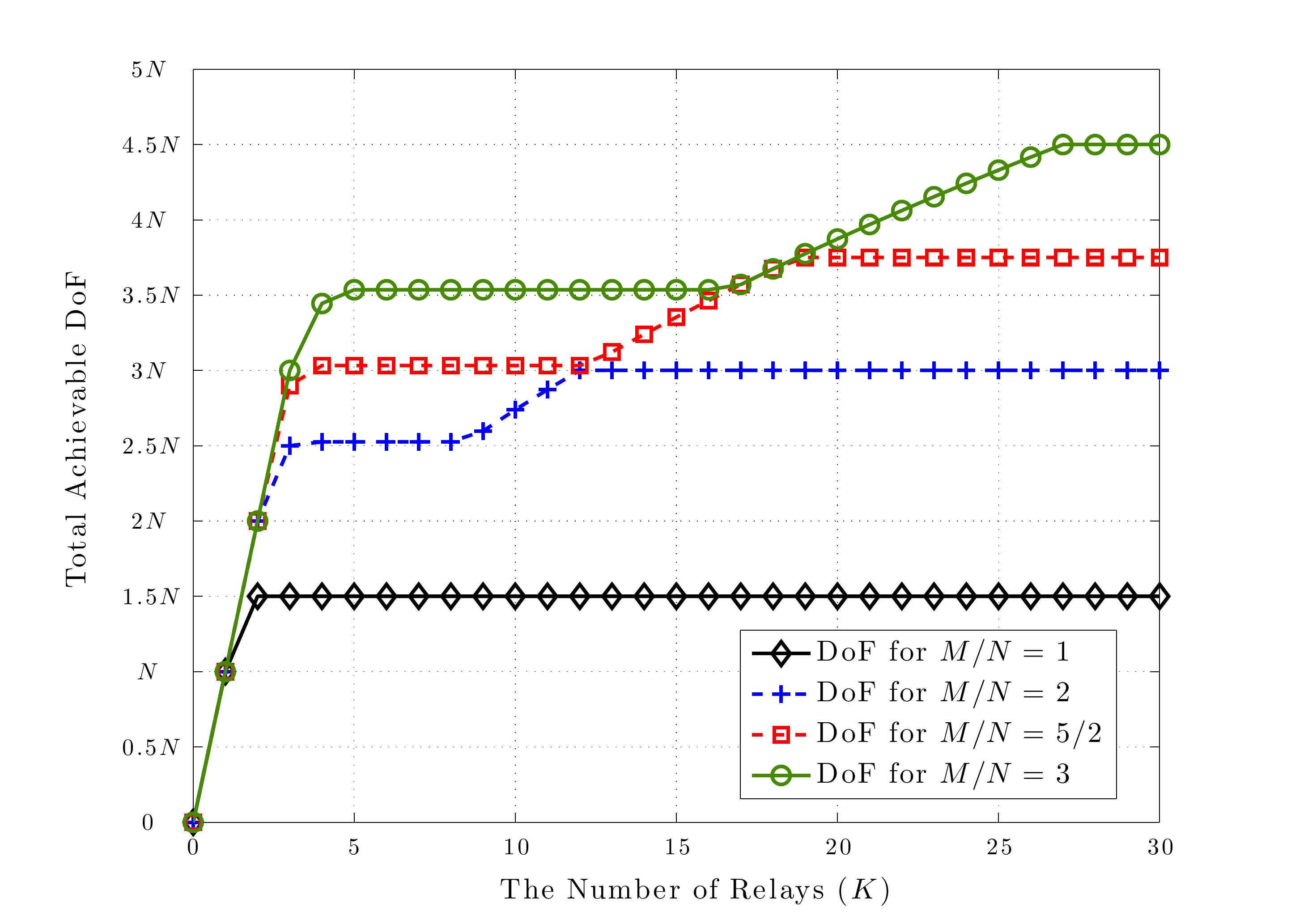}
  \caption{The achievable total DoF against the number of relays.}\label{Fig:4}
\end{figure}

\section{Proof of Theorem 1}
\label{Proof of Theorem 1}
This section is dedicated to the proof of Theorem 1. For $K=1$, Theorem 1 has been proven in \cite{3}. Thus, it suffices to only consider the case of $K\geq 2$. Based on the linear processing techniques in Section II-B, our goal is to jointly design $\{\mathbf{U}_{j,j'}, \mathbf{F}_k,\mathbf{V}_j\}$ to satisfy condition \eqref{zeroforcing}. We propose two types of signal alignment methods for $\{\mathbf{U}_{j,j'}\}$ to generate redundant equations in \eqref{zeroforcing_1}-\eqref{zeroforcing_2}, as detailed below.

\subsection{Signal Alignment I}
Signal Alignment I is described as follows. For $j = 0,1,2$, $\mathbf{U}_{j,j+1}$ and $\mathbf{U}_{j+1,j}$ are aligned to satisfy
\begin{align}
\label{pairwisealignment}
\mathbf{H}_{k,j}\mathbf{U}_{j,j+1} =  \mathbf{H}_{k,j+1}\mathbf{U}_{j+1,j}, \quad \!\!k=0,1,\cdots,K-1,
\end{align}
or equivalently
\begin{subequations}
\begin{align}
\label{si}
\mathbf{K}_j \underbrace{\left[\mathbf{U}_{j,j+1}^T,\mathbf{U}^T_{j+1,j}\right]^T}_{2M\times d}  = \mathbf{0}
\end{align}
where
\begin{align}
\label{K_j}
\mathbf{K}_j = 
\left[\begin{array}{cc}
    \mathbf{H}_{0,j} & - \mathbf{H}_{0,j+1}\\
    \vdots & \vdots\\
    \mathbf{H}_{K-1,j} & - \mathbf{H}_{K-1,j+1}
  \end{array}\right] \in \mathbb{C}^{KN \times 2M}.
\end{align}
\end{subequations}
From the rank-nullity theorem, there exist full-rank $\mathbf{U}_{j,j+1}$ and $\mathbf{U}_{j+1,j}$ satisfying \eqref{si} with probability one, provided
\begin{equation}
\label{alignmentcondition}
2M-NK\geq d.
\end{equation}

With \eqref{pairwisealignment}, we see that in \eqref{zeroforcing_1}-\eqref{zeroforcing_2}, the three equations on the left are identical to the three equations on the right respectively. Ignoring the redundant equations, we rewrite \eqref{zeroforcing_1}-\eqref{zeroforcing_2} as
\begin{subequations}
\label{new_zeroforcing}
\begin{equation}
\label{new_zeroforcing1}
\sum_{k=0}^{K-1}\mathbf{V}_j\mathbf{G}_{j,k}\mathbf{F}_k\mathbf{H}_{k,j+1}\mathbf{U}_{j+1,j-1}=\mathbf{0},\quad\!\! j = 0,1,2.\\
\end{equation}
For completeness, we repeat \eqref{rank} as follows:
\begin{align}
\label{new_rank}
\mathrm{rank}\left(\sum_{k=0}^{K-1}\mathbf{V}_j\mathbf{G}_{j,k}\mathbf{F}_k\mathbf{W}_{k,j}\right)=2d,\quad \!\!j=0,1,2.
\end{align}
\end{subequations}
What remains is to determine the value of $\left(\frac{M}{N}, d\right)$ to ensure the existence of $\{\mathbf{F}_k\}$ and $\{\mathbf{V}_j\}$ satisfying \eqref{new_zeroforcing}.

\subsection{Achievable DoF for Signal Alignment I}
Based on the proposed signal alignment in \eqref{pairwisealignment}, for user $j$, the other interference term in \eqref{received signal} is rewritten as
\begin{align}
\label{otherinterference}
\sum_{k = 0}^{K-1}\mathbf{G}_{j,k}\mathbf{F}_k\mathbf{H}_{k,j+1}\mathbf{U}_{j+1,j+2}\left(\mathbf{S}_{j+1,j+2}+\mathbf{S}_{j+2,j+1}\right)
\end{align}
spanning a subspace of $d$-dimension. To neutralize the above interference at every user, we first design $\{\mathbf{F}_k\}$ to neutralize a portion of the interference, and then design $\left\{\mathbf{V}_j\right\}$ to null the remaining part. Specifically, we split the beamforming matrix $\mathbf{U}_{j,j+1}$ as
\begin{align}
\label{divideU}
\mathbf{U}_{j,j+1} = \left[\mathbf{U}^{(L)}_{j,j+1},\mathbf{U}^{(R)}_{j,j+1}\right], \quad\!\!j=0,1,2
\end{align}
where $\mathbf{U}^{(L)}_{j,j+1}\in \mathbb{C}^{M \times d'}$ and $\mathbf{U}^{(R)}_{j,j+1}\in \mathbb{C}^{M \times (d-d')}$ with $d'$ defined as
\begin{align}
\label{d'}
d' = 3d - M.
\end{align}
We design $\{\mathbf{F}_k\}$ to neutralize the interference corresponding to $\mathbf{U}_{j+1,j+2}^{(L)}$ for each user. That is, $\{\mathbf{F}_k\}$ need to satisfy
\begin{equation}
\label{pneutralize}
\sum_{k=0}^{K-1}\mathbf{G}_{j,k}\mathbf{F}_k\mathbf{H}_{k,j+1}\mathbf{U}^{(L)}_{j+1,j+2}=\mathbf{0},\quad\!\! j =0,1,2.\\
\end{equation}
With \eqref{pneutralize}, the received signal at user $j$ in \eqref{received signal} still contains a $2d$-dimensional desired signal and the $(d-d')$-dimensional interference corresponding to $\mathbf{U}^{(R)}_{j+1,j+2}$. With $d'$ in \eqref{d'}, we have $2d+d-d'=M$, implying that the antennas at each user are already fully utilized. Thus, the rank requirement \eqref{new_rank} can be alternatively represented by
\begin{small}
\begin{equation}
\label{prank}
\mathrm{rank}\left(\sum^{K-1}_{k=0}\mathbf{G}_{j,k}\mathbf{F}_k\left[\mathbf{W}_{k,j},\mathbf{H}_{k,j+1}\mathbf{U}^{(R)}_{j+1,j+2}\right]\right) =M,\quad\!\! j=0,1,2.\\
\end{equation}
\end{small}
\!\!\!We design $\mathbf{V}_j$ to null the remaining interference corresponding to $\mathbf{U}^{(R)}_{j+1,j+2}$ for each user. Specifically, we require $\mathbf{V}_{j} \in \mathbb{C}^{2d \times M}$ to satisfy
\begin{small}
\begin{equation}
\label{constructV}
\mathbf{V}_j\sum_{k=0}^{K-1}\mathbf{G}_{j,k}\mathbf{F}_k\mathbf{H}_{k,j+1}\mathbf{U}^{(R)}_{j+1,j+2}\!=\!\mathbf{0}, \quad \!\! j=0,1,2.
\end{equation}
\end{small}
\!\!Note that the left null space of $\sum_{k=0}^{K-1}\mathbf{G}_{j,k}\mathbf{F}_k\mathbf{H}_{k,j+1}\mathbf{U}^{(R)}_{j+1,j+2}\in \mathbb{C}^{M\times (d-d')}$ has at least $M-(d-d') =2d$ dimensions. Therefore, there always exists $\mathbf{V}_j \in \mathbb{C}^{2d \times M}$ with full row rank satisfying \eqref{constructV}. 

We now show that the above $\{\mathbf{F}_k\}$ and $\{\mathbf{V}_j\}$ satisfy \eqref{new_zeroforcing}. We start with \eqref{new_zeroforcing1} for $j=0$:
\begin{subequations}
\begin{small}
\begin{align}
\mathbf{V}_0\sum_{k=0}^{K-1}\mathbf{G}_{0,k}\mathbf{F}_k\mathbf{H}_{k,1}\mathbf{U}_{1,2}=& \left[\mathbf{V}_0\sum_{k=0}^{K-1}\mathbf{G}_{0,k}\mathbf{F}_k\mathbf{H}_{k,1}\mathbf{U}^{(L)}_{1,2},\quad \mathbf{V}_0\sum_{k=0}^{K-1}\mathbf{G}_{0,k}\mathbf{F}_k\mathbf{H}_{k,1}\mathbf{U}^{(R)}_{1,2}\right] \\
\label{constructV2_1}
=&\left[\mathbf{0},\quad\!\!\! \mathbf{V}_0\sum_{k=0}^{K-1}\mathbf{G}_{1,k}\mathbf{F}_k\mathbf{H}_{k,1}\mathbf{U}^{(R)}_{1,2}\right] \\
\label{constructV2_2}
=&\quad \!\!\!\!\mathbf{0}
\end{align}
\end{small}
\end{subequations}
\!\!where \eqref{constructV2_1} and \eqref{constructV2_2} follow from \eqref{pneutralize} and \eqref{constructV}, respectively. Similarly, condition \eqref{new_zeroforcing1} also holds for $j=1,2$. We then consider \eqref{new_rank} with $j=0$ and obtain
\begin{small}
\begin{subequations}
\begin{align}
\label{constructV3_1}
\mathrm{rank}\left(\mathbf{V}_0\sum_{k=0}^{K-1}\mathbf{G}_{0,k}\mathbf{F}_k\mathbf{W}_{k,0} \right)
=&\quad \!\!\!\mathrm{rank}\left(\left[\mathbf{V}_0\sum_{k=0}^{K-1}\mathbf{G}_{0,k}\mathbf{F}_k\mathbf{W}_{k,0},\mathbf{V}_0\sum_{k=0}^{K-1}\mathbf{G}_{0,k}\mathbf{F}_k\mathbf{H}_{k,1}\mathbf{U}^{(R)}_{1,2}\right]\right)\\
\label{constructV3_2}
=&\quad \!\!\!\mathrm{rank}\left(\mathbf{V}_0\sum_{k=0}^{K-1}\mathbf{G}_{0,k}\mathbf{F}_k\left[\mathbf{W}_{k,0}, \mathbf{H}_{k,1}\mathbf{U}^{(R)}_{1,2}\right]\right)=2d
\end{align}
\end{subequations}
\end{small}
\!\!where \eqref{constructV3_1} follows from \eqref{constructV}; \eqref{constructV3_2} is due to the fact that $\sum_{k=0}^{K-1}\mathbf{G}_{0,k}\mathbf{F}_k\left[\mathbf{W}_{k,0}, \mathbf{H}_{k,1}\mathbf{U}^{(R)}_{1,2}\right]\in \mathbb{C}^{M\times M}$ is of full rank from \eqref{prank}. Similarly, \eqref{new_rank} also holds for $j=1,2$.

Consequently, to prove that a certain point $\left(\frac{M}{N}, d\right)$ is achievable for Signal Alignment I, it suffices to show that there exist $\{\mathbf{U}_{j,j'}\}$ and $\{\mathbf{F}_k\}$ satisfying \eqref{pairwisealignment}, \eqref{pneutralize}, and \eqref{prank}. We have the following lemma:
\begin{lemma}
\label{LemmaDoF1}
For ${\frac{M}{N} \in \Big[\frac{9K+\sqrt{81K^2+60K}}{30},\frac{3K+\sqrt{9K^2-12K}}{6}\Big)}$ and $d<\frac{M}{3} + \frac{KN^2}{9M}$, there exist $\{\mathbf{U}_{j,j'}\}$ and $\{\mathbf{F}_k\}$ satisfying \eqref{pairwisealignment}, \eqref{pneutralize}, and \eqref{prank} with probability one.
\end{lemma}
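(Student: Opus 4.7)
The plan is to prove the lemma in three stages: (i) construct user precoders $\{\mathbf{U}_{j,j'}\}$ realizing the pairwise alignment in \eqref{pairwisealignment}; (ii) confirm that the induced linear system \eqref{pneutralize} on the relay precoders $\{\mathbf{F}_k\}$ has a non-trivial solution space; and (iii) show that a generic point of this solution space also satisfies the rank condition \eqref{prank}. Step (i) is a direct rank-nullity argument on \eqref{si}: a full-column-rank $[\mathbf{U}_{j,j+1}^{T},\mathbf{U}_{j+1,j}^{T}]^{T}$ exists almost surely provided $2M-KN\geq d$. Under the hypothesis $d<\frac{M}{3}+\frac{KN^{2}}{9M}$, it suffices to verify $\frac{M}{3}+\frac{KN^{2}}{9M}\leq 2M-KN$, which rearranges to $15M^{2}-9MKN-KN^{2}\geq 0$. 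The positive root of this quadratic in $\frac{M}{N}$ is exactly $\frac{9K+\sqrt{81K^{2}+60K}}{30}$, the left endpoint of the interval in the statement. Thus in the assumed regime the alignment step succeeds, and we then split $\mathbf{U}_{j,j+1}=[\mathbf{U}^{(L)}_{j,j+1},\mathbf{U}^{(R)}_{j,j+1}]$ with $d'=3d-M$ as in \eqref{divideU}--\eqref{d'}.

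For step (ii), vectorize \eqref{pneutralize}. Each of the three equations (indexed by $j$) is an $M\times d'$ matrix equation, contributing $Md'$ scalar constraints on the stacked vector $[\mathrm{vec}(\mathbf{F}_{0})^{T},\ldots,\mathrm{vec}(\mathbf{F}_{K-1})^{T}]^{T}\in\mathbb{C}^{KN^{2}}$. The hypothesis $d<\frac{M}{3}+\frac{KN^{2}}{9M}$ is equivalent to $3Md'<KN^{2}$, so the solution set $\mathcal{F}$ has affine dimension at least $KN^{2}-3Md'>0$. In particular, non-trivial $\{\mathbf{F}_{k}\}$ solving \eqref{pneutralize} exist almost surely.

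Step (iii) is the main obstacle. The rank condition \eqref{prank} for each user $j$ is the non-singularity of a certain $M\times M$ matrix $\boldsymbol{\Phi}_{j}(\{\mathbf{F}_{k}\})$, equivalently the non-vanishing of $\det\boldsymbol{\Phi}_{j}$, which is a polynomial in the entries of $\{\mathbf{F}_{k}\}$. Restricted to the linear subspace $\mathcal{F}$, $\det\boldsymbol{\Phi}_{j}$ is still a polynomial, and if it is not identically zero on $\mathcal{F}$ then its non-vanishing locus is Zariski-open and of full measure in $\mathcal{F}$. The three such subsets intersect in an open dense subset of $\mathcal{F}$ on which both \eqref{pneutralize} and \eqref{prank} hold. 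The crux is thus to establish non-identical-vanishing of each $\det\boldsymbol{\Phi}_{j}$ on $\mathcal{F}$ for generic channels.

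The strategy I would use for this final step is the general solvability technique for linear matrix equations with rank constraints that the paper advertises in Section III-B: produce a single witness $\{\mathbf{F}_{k}^{\star}\}\in\mathcal{F}$ at which all three $\det\boldsymbol{\Phi}_{j}$ are non-zero. A convenient candidate is constructed by first choosing $\{\mathbf{F}_{k}^{\star}\}$ adapted to the channel realization so that $\sum_{k}\mathbf{G}_{j,k}\mathbf{F}_{k}^{\star}[\mathbf{W}_{k,j},\mathbf{H}_{k,j+1}\mathbf{U}^{(R)}_{j+1,j+2}]$ is close to an $M\times M$ identity, and then projecting onto the null space of the operator defining \eqref{pneutralize}. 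The upper endpoint $\frac{M}{N}<\frac{3K+\sqrt{9K^{2}-12K}}{6}$ enters precisely here: it guarantees that the $KN^{2}-3Md'$ degrees of freedom remaining after interference neutralization are sufficient to realize a full-rank effective channel at each user simultaneously. Once such a witness is exhibited, Schwartz--Zippel-type genericity in the channel entries, combined with the polynomial nature of $\det\boldsymbol{\Phi}_{j}$, delivers the ``with probability one'' conclusion; this last, non-trivial compatibility between \eqref{pneutralize} and \eqref{prank} is what I expect to be the hardest part and the place where the paper's new solvability machinery is essential.
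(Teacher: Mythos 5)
Your steps (i) and (ii) match the paper: the left endpoint $\frac{9K+\sqrt{81K^2+60K}}{30}$ is indeed exactly the root of $15M^2-9KMN-KN^2=0$ that makes $2M-KN\geq d$ feasible, and the counting $3Md'<KN^2$ is the paper's inequality \eqref{wideK}. The gap is in step (iii), and it is precisely the step you flag as hardest. You propose to fix a generic channel, pick $\{\mathbf{F}_k^\star\}$ making the effective channels near-identity, and then ``project onto the null space of the operator defining \eqref{pneutralize}.'' This does not work as stated: projection onto the solution space $\mathcal{F}$ of the neutralization constraints has no reason to preserve the full-rank property of $\sum_k\mathbf{G}_{j,k}\mathbf{F}_k[\mathbf{W}_{k,j},\mathbf{H}_{k,j+1}\mathbf{U}^{(R)}_{j+1,j+2}]$ --- reconciling \eqref{pneutralize} with \eqref{prank} for a \emph{given} channel is the entire difficulty, and no argument is supplied for why a rank-$M$ point survives inside $\mathcal{F}$.

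The paper resolves this by reversing which object is special. It parametrizes the general solution of \eqref{pneutralize} (and of the alignment equations) by free variables $\{\alpha_a\},\{\beta_b\}$, so that the determinant product $D_{M,N,d}=\prod_j\det(\cdot)$ becomes a finite-degree polynomial $\sum_t p_t(\mathbf{T})\,g_t(\{\alpha_a\},\{\beta_b\})$ jointly in the channel $\mathbf{T}$ and the free variables. It then exhibits a single witness by constructing a \emph{special channel realization} $\widehat{\mathbf{T}}$ adapted to the trivial precoders $\widehat{\mathbf{F}}_k=\mathbf{I}_N$: the matrices $\widehat{\mathbf{H}}_{k,j+1}\widehat{\mathbf{U}}^{(L)}_{j+1,j+2}$ are placed in the null space of $[\widehat{\mathbf{G}}_{j,0},\dots,\widehat{\mathbf{G}}_{j,K-1}]$ (possible because $KN-M\geq d'$, which is where the upper endpoint of the interval actually enters, via $KN\geq 3d$ and $M<KN$ --- not via a count of residual degrees of freedom in $\mathcal{F}$ as you suggest), while the rank condition holds because $\widehat{\mathbf{G}}_{0,k}$ and $[\widehat{\mathbf{W}}_{k,0},\widehat{\mathbf{H}}_{k,1}\widehat{\mathbf{U}}^{(R)}_{1,2}]$ are independent and the $K$ relay contributions sum to rank $\min\{KN,M\}=M$. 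Since $D_{M,N,d}$ is then a nonzero polynomial, $\{p_t\}$ cannot all vanish identically, and for a random channel one can choose $\{\alpha_a\},\{\beta_b\}$ with $D_{M,N,d}\neq 0$ almost surely. To repair your proof you should adopt this joint-polynomial viewpoint and construct the witness in channel space rather than in precoder space; as written, your step (iii) is an unproven assertion rather than an argument.
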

\begin{proof}
See Appendix I-A.
\end{proof}

Note that the DoF $d$ in Lemma \ref{LemmaDoF1} is not necessarily an integer. A frequently used technique to achieve a rational DoF is symbol extension in which an extended MIMO system is constructed with multiple channel uses. The details can be found in Appendix II.

\subsection{Signal Alignment II}
\label{Signal Alignment II}
The derivation in the preceding subsection is based on the signal alignment \eqref{pairwisealignment}. However, this alignment imposes the requirement of \eqref{alignmentcondition} that may not always be met. In this subsection, we generalize the signal alignment in \eqref{pairwisealignment} by allowing the signal to be aligned in a subspace of the receiving signal space seen at each relay. Note that for Signal Alignment I in \eqref{pairwisealignment}, the full received signal space at each relay is considered in aligning signals. For  Signal Alignment II described below, we will only consider the projection of user signals into a subspace of the signal space at each relay. This will relax the signal alignment constraint in \eqref{alignmentcondition} at the cost of certain loss of freedom for each relay to process its received signal.

Let
\begin{equation}
\label{N'}
N' = \frac{2M-d}{K}
\end{equation}
be the dimension of the subspace for signal alignment. Of course, $d$ should be chosen to ensure $0\leq N' \leq N$. Then, each relay deactivates $N-N'$ antennas for signal reception, with the number of active antennas intact for broadcasting. That is, the number of active antennas of each relay is $N'$ in the uplink phase and $N$ in the downlink phase. This implies that the proposed uplink-downlink precoding design is asymmetric in nature.

A convenient way to select a subspace of dimension $N'$ is to deactivate the last $N-N'$ antennas at each relay. Then the precoding matrix of each relay $k$ can be decomposed as
\begin{equation}
\label{deactivate}
\mathbf{F}_k = \tilde{\mathbf{F}}_k \mathbf{E},\quad k = 0,\cdots,K-1
\end{equation}
where $\tilde{\mathbf{F}}_k \in \mathbb{C}^{N\times N'}$ and $\mathbf{E} = \left[\mathbf{I}_{N'}, \mathbf{0}_{N'\times (N-N')}\right] \in \mathbb{C}^{N'\times N}$. By the deactivation, the effective channel matrix from user $j$ to relay $k$ is $\tilde{\mathbf{H}}_{k,j} = \mathbf{E}\mathbf{H}_{k,j}\in \mathbb{C}^{N'\times M}$ and the processing matrix of relay $k$ becomes $\tilde{\mathbf{F}}_k$. As analogous to \eqref{pairwisealignment}, we aim to design $\mathbf{U}_{j,j'} \in \mathbb{C}^{M \times d}$ satisfying
\begin{align}
\label{alignment}
    \tilde{\mathbf{H}}_{k,j}\mathbf{U}_{j,j+1}  = \tilde{\mathbf{H}}_{k,j+1}\mathbf{U}_{j+1,j}, \quad \!\! k = 0,\cdots,K-1,
\end{align}
or equivalently
\begin{subequations}
\begin{align}
\tilde{\mathbf{K}}_j \underbrace{\left[\mathbf{U}_{j,j+1}^T, \mathbf{U}_{j+1,j}^T\right]^T}_{2M\times d}  = \mathbf{0}
\end{align}
where
\begin{align}
\label{K'_j}
\tilde{\mathbf{K}}_j = \left[\begin{array}{cc}
    \tilde{\mathbf{H}}_{0,j} & - \tilde{\mathbf{H}}_{0,j+1}\\
    \vdots & \vdots\\
    \tilde{\mathbf{H}}_{K-1,j} & - \tilde{\mathbf{H}}_{K-1,j+1}
  \end{array}\right] \in \mathbb{C}^{KN' \times 2M}, \quad \!\! j = 0,1,2.
\end{align}
\end{subequations}
Since $2M - KN' = d$ by the definition of $N'$ in \eqref{N'},
there exist full-rank matrices $\mathbf{U}_{j,j+1}$ and $\mathbf{U}_{j+1,j}$ satisfying \eqref{alignment} with probability one.

Similarly to Signal Alignment I, with \eqref{deactivate} and \eqref{alignment}, \eqref{zeroforcing_1}-\eqref{zeroforcing_2} reduces to
\begin{subequations}
\label{new_zeroforcing'}
\begin{align}
\label{new_zeroforcing1'}
\sum_{k=0}^{K-1}\mathbf{V}_j\mathbf{G}_{j,k}\tilde{\mathbf{F}}_k\tilde{\mathbf{H}}_{k,j+1}\mathbf{U}_{j+1,j+2}=\mathbf{0}
\end{align}
and \eqref{new_rank} can be rewritten as
\begin{align}
\label{newrank'}
\mathrm{rank}\left(\sum_{k=0}^{K-1}\mathbf{V}_j\mathbf{G}_{j,k}\tilde{\mathbf{F}}_k\tilde{\mathbf{W}}_{k,j}\right)=2d
\end{align}
\end{subequations}
where
\begin{equation}
\label{definew}
\tilde{\mathbf{W}}_{k,j} = \mathbf{E}\mathbf{W}_{k,j}, \quad j = 0,1,2.
\end{equation}
Then, what remains is to determine $\left(\frac{M}{N},d\right)$ that ensures the existence of $\{\tilde{\mathbf{F}}_k\}$ and $\{\mathbf{V}_j\}$ satisfying \eqref{new_zeroforcing'}.

\subsection{Achievable DoF for Signal Alignment II}
We now consider the construction of $\{\tilde{\mathbf{F}}_k,\mathbf{V}_j\}$ to satisfy \eqref{new_zeroforcing'}. Similarly to Signal Alignment I, $\{\tilde{\mathbf{F}}_k\}$ are designed to neutralize the interference corresponding to $\mathbf{U}^{(L)}_{j,j+1} \in \mathbb{C}^{M \times d'}$, while $\{\mathbf{V}_j\}$ are designed to null the remaining interference corresponding to $\mathbf{U}^{(R)}_{j,j+1} \in \mathbb{C}^{M \times (d-d')}$. More specifically, as analogous to \eqref{pneutralize} and \eqref{prank}, we require $\{\tilde{\mathbf{F}}_k\}$ to satisfy
\begin{subequations}
\label{pneutralize'}
\begin{align}
\label{pneutralize1'}
\sum_{k=0}^{K-1}\mathbf{G}_{j,k}\tilde{\mathbf{F}}_k\tilde{\mathbf{H}}_{k,j+1}\mathbf{U}^{(L)}_{j+1,j+2}=\mathbf{0}
\end{align}
\begin{align}
\label{pneutralize4'}
\mathrm{rank}\left(\sum^{K-1}_{k=0}\mathbf{G}_{j,k}\tilde{\mathbf{F}}_k\left[\tilde{\mathbf{W}}_{k,j},\tilde{\mathbf{H}}_{k,j+1}\mathbf{U}^{(R)}_{j+1,j+2}\right]\right) &=M , \quad\!\! j=0,1,2.
\end{align}
\end{subequations}
Moreover, $\mathbf{V}_{j} \in \mathbb{C}^{2d \times M}$ is of full row rank and satisfies
\begin{equation}
\label{constructV'}
\mathbf{V}_j\sum_{k=0}^{K-1}\mathbf{G}_{j,k}\tilde{\mathbf{F}}_k\tilde{\mathbf{H}}_{k,j+1}\mathbf{U}^{(R)}_{j+1,j+2}\!=\!\mathbf{0}, \quad\!\! j=0,1,2.
\end{equation}
It can be readily shown that, with $d'$ in \eqref{d'} and $N'$ in \eqref{N'}, there always exist $\{\mathbf{V}_j\}$ satisfying \eqref{constructV'}. Thus, to prove the achievability of a certain DoF point $\left(\frac{M}{N},d\right)$, it suffices to show that there exist $\{\tilde{\mathbf{F}}_k\}$ satisfying \eqref{pneutralize'}. We have the following result.

\begin{lemma}
\label{LemmaDoF2}
For ${\frac{M}{N} \in \Big[1, \frac{9K+\sqrt{81K^2+60K}}{30}\Big)}$ and $d<\frac{3M^2+2MN}{9M+N}$, there exist $\{\mathbf{U}_{j,j'}\}$ and $\{\tilde{\mathbf{F}}_k\}$ satisfying \eqref{alignment} and \eqref{pneutralize'} with probability one.
\end{lemma}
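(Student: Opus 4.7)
The plan is to mirror the four-stage argument used for Lemma \ref{LemmaDoF1} (Signal Alignment I), with the one structural change being the antenna-deactivation bookkeeping through $N' = (2M-d)/K$.

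First I would establish the alignment step and verify that the hypotheses carve out a consistent regime for $(d, N', d')$. By the definition $N' = (2M-d)/K$ we have $KN' = 2M - d$, so $\tilde{\mathbf{K}}_j \in \mathbb{C}^{KN' \times 2M}$ in \eqref{K'_j} has full row rank with probability one (its entries come from a continuous distribution), its right nullspace has dimension exactly $d$, and a generic basis furnishes full-column-rank $\mathbf{U}_{j,j+1}$ and $\mathbf{U}_{j+1,j}$ satisfying \eqref{alignment}. The split \eqref{divideU} with $d' = 3d - M$ further requires $d \le M/2$ so that $d - d' \ge 0$; a short computation shows $\frac{3M^2+2MN}{9M+N} \le \frac{M}{2}$ precisely when $M \ge N$, so the hypothesis $M/N \ge 1$ combined with $d < \frac{3M^2+2MN}{9M+N}$ forces $d \le M/2$ automatically. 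The domain constraint $N' \in [0,N]$ requires $d \ge 2M - KN$, and this lower bound lies strictly below $\frac{3M^2+2MN}{9M+N}$ iff $15(M/N)^2 - 9K(M/N) - K < 0$, i.e., $M/N < \frac{9K+\sqrt{81K^2+60K}}{30}$, which is precisely the stated upper bound. Hence under the hypotheses a valid choice of $d$ exists.

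Second I would turn \eqref{pneutralize1'} into a homogeneous linear system on the vectorized relay precoders. Using $\mathrm{vec}(\mathbf{G}_{j,k}\tilde{\mathbf{F}}_k\tilde{\mathbf{H}}_{k,j+1}\mathbf{U}^{(L)}_{j+1,j+2}) = \bigl((\tilde{\mathbf{H}}_{k,j+1}\mathbf{U}^{(L)}_{j+1,j+2})^T \otimes \mathbf{G}_{j,k}\bigr)\mathrm{vec}(\tilde{\mathbf{F}}_k)$ and stacking over $j\in\{0,1,2\}$ and $k\in\{0,\dots,K-1\}$, the constraints become $\mathbf{A}\,\mathrm{vec}([\tilde{\mathbf{F}}_0,\dots,\tilde{\mathbf{F}}_{K-1}]) = \mathbf{0}$ with $\mathbf{A}\in\mathbb{C}^{3Md'\times KNN'}$. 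Substituting $N' = (2M-d)/K$ and $d' = 3d - M$ shows that $KNN' > 3Md'$ is equivalent to $d(9M+N) < 3M^2 + 2MN$, which is exactly the hypothesis. Thus $\ker\mathbf{A}$ has strictly positive dimension and contains nonzero $\{\tilde{\mathbf{F}}_k\}$ satisfying \eqref{pneutralize1'}.

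The main obstacle is the third step: promoting this existence statement to one in which the rank condition \eqref{pneutralize4'} also holds generically. Since \eqref{pneutralize4'} is the non-vanishing of an $M\times M$ determinantal polynomial in the entries of $\{\tilde{\mathbf{F}}_k\}$ and of the channels, by a Schwartz-Zippel argument applied inside the fibers of $\ker\mathbf{A}$ it suffices to exhibit one channel realization together with some $\{\tilde{\mathbf{F}}_k\}\in\ker\mathbf{A}$ for which this determinant is nonzero; the continuous distribution of the channels then lifts this to a probability-one statement. The delicate point is that the $3Md'$ alignment constraints carve out a large-codimension slice of $\mathbb{C}^{KNN'}$ that one must stay within while still achieving full rank, so the trivial witness $\tilde{\mathbf{F}}_k=\mathbf{0}$ is unavailable. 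I would invoke the rank-constrained linear-matrix-equation technique highlighted in the Introduction (and deployed for Lemma \ref{LemmaDoF1} in Appendix I-A), specializing to structured channel realizations (for instance partially block-diagonal or generic Vandermonde) that partially decouple the neutralization constraints across relays, and then explicitly producing a $\{\tilde{\mathbf{F}}_k\}$ in that slice whose composite channel in \eqref{pneutralize4'} is visibly invertible. Combining this with the preceding steps would close the proof with probability one; rational values of $d$ are handled afterwards by the same symbol-extension procedure used for Lemma \ref{LemmaDoF1}.
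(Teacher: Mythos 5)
Your first two stages are correct and match the paper's own proof: the verification that $M/N\geq 1$ forces $d\leq M/2$ (hence $d-d'\geq 0$), the identification of $M/N<\frac{9K+\sqrt{81K^2+60K}}{30}$ as exactly the condition under which $2M-KN<\frac{3M^2+2MN}{9M+N}$ so that $N'\leq N$ is compatible with the target $d$, and the count $KNN'>3Md'\Leftrightarrow d(9M+N)<3M^2+2MN$ are all the right computations, and the reduction of the rank condition \eqref{pneutralize4'} to the non-vanishing of a polynomial in the channel entries and the free parameters of $\ker\mathbf{A}$ (so that a single witness suffices) is precisely the content of the paper's Lemma \ref{ProveLemma2}, which is carried over verbatim from Lemma \ref{ProveLemma1}.

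The gap is in your third step: you correctly identify the witness construction as the crux but do not actually supply it, and the direction you sketch --- fix a structured (block-diagonal or Vandermonde) channel and then hunt for a $\{\tilde{\mathbf{F}}_k\}$ inside the codimension-$3Md'$ slice whose composite channel is invertible --- is not what the paper does and is not obviously executable, for exactly the reason you flag yourself: you cannot decouple the search for a kernel element from the search for full rank. The paper's trick is to reverse which objects are fixed and which are solved for. It sets $\widehat{\tilde{\mathbf{F}}}_k=[\mathbf{I}_{N'},\mathbf{0}_{N'\times(N-N')}]^T$ as in \eqref{F_tilde}, draws $\{\widehat{\mathbf{G}}_{j,k}\}$ and the products $\{\widehat{\tilde{\mathbf{H}}}_{k,j}\widehat{\mathbf{U}}^{(R)}_{j,j+1}\}$ at random, and then \emph{constructs the channel} so that neutralization holds: the stacked products $\{\widehat{\tilde{\mathbf{H}}}_{k,j+1}\widehat{\mathbf{U}}^{(L)}_{j+1,j+2}\}$ are chosen in the right null space of the $M\times KN'$ matrix $[\widehat{\mathbf{G}}_{j,0}\widehat{\tilde{\mathbf{F}}}_0,\cdots,\widehat{\mathbf{G}}_{j,K-1}\widehat{\tilde{\mathbf{F}}}_{K-1}]$ as in \eqref{specificH'}, which is nonempty of dimension at least $d'$ because $KN'-M=M-d>d\geq d'$. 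With $\widehat{\tilde{\mathbf{F}}}_k$ fixed as a truncation, the constraints \eqref{pneutralize1'} become \emph{linear in the channel}, so the witness exists trivially, and the rank condition \eqref{pneutralize4'} follows with probability one from the mutual independence of $\{\widehat{\mathbf{G}}_{0,k}\}$, $\{\widehat{\tilde{\mathbf{W}}}_{k,0}\}$, and $\{\widehat{\tilde{\mathbf{H}}}_{k,1}\widehat{\mathbf{U}}^{(R)}_{1,2}\}$ together with $M<KN'=2M-d$. Without this role reversal (or an equally concrete alternative), your argument does not close, since the entire genericity machinery rests on exhibiting one explicit point where the determinant product is nonzero.
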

\begin{proof}
See Appendix I-B.
\end{proof}

\subsection{Achievable DoF with No Signal Alignment}
From Lemmas \ref{LemmaDoF1} and \ref{LemmaDoF2}, we see that both Signal Alignment I and II cannot be realized when $\frac{M}{N}$ is relatively small. We next focus on relatively small values of $\frac{M}{N}$ and show that $d_\mathrm{sum}=\frac{3M}{2}$ (or equivalently, $d=\frac{M}{2}$) is achievable for $\frac{M}{N} \in \left(0,\frac{\sqrt{3K}}{K}\right)$. To start with, we set
\begin{equation}
\label{UV}
\mathbf{V}_{j} = \mathbf{I}_{M}, \quad\!\!
\mathbf{U}_{j,j+1}  =  [\mathbf{I}_{d} ,\mathbf{0}_{d\times d}]^T,\quad\!\! \mathbf{U}_{j,j-1}  = [\mathbf{0}_{d\times d} ,\mathbf{I}_{d}]^T, \quad\!\!\mathrm{for} \quad\!\! j =0,1,2.
\end{equation}
We see that, \eqref{zeroforcing_1}-\eqref{zeroforcing_2} reduce to a linear system of $\{\mathbf{F}_k\}$ with $KN^2$ unknown variables and $3M^2$ equations. The system allows non-zero solutions of $\{\mathbf{F}_k\}$ provided $KN^2 > 3M^2$, i.e., $\frac{M}{N} \in \left(0, \frac{\sqrt{3K}}{3}\right)$. To prove the achievability of $d = \frac{M}{2}$, it suffices to show that there exist $\{\mathbf{F}_k\}$ satisfying all the conditions in \eqref{zeroforcing}. We have the following result. 

\begin{lemma}
\label{LemmaDoF3}
For ${\frac{M}{N} \in \Big(0,\frac{\sqrt{3K}}{K}\Big)}$ and $d = \frac{M}{2}$, there exist $\{\mathbf{F}_{k}\}$, together with $\{\mathbf{U}_{j,j'}, \mathbf{V}_{j}\}$ in \eqref{UV}, satisfying \eqref{zeroforcing} with probability one.
\end{lemma}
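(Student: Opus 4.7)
The strategy is to treat \eqref{zeroforcing_1}--\eqref{zeroforcing_2} as a homogeneous linear system in $\{\mathbf{F}_k\}$, verify its solvability by dimension counting, and then argue that the three full-rank requirements in \eqref{rank} are generically met on the resulting solution space. With $\mathbf{V}_j=\mathbf{I}_M$ and the block-column $\mathbf{U}_{j,j'}$ fixed as in \eqref{UV}, each of the six matrix equations in \eqref{zeroforcing_1}--\eqref{zeroforcing_2} has size $M\times d = M\times (M/2)$, so the six together impose $3M^2$ scalar constraints on the $KN^2$ unknown entries of $\{\mathbf{F}_0,\ldots,\mathbf{F}_{K-1}\}$. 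Under $\frac{M}{N}<\frac{\sqrt{3K}}{3}$ we have $KN^2>3M^2$, so by rank-nullity the solution space $\mathcal{F}(\mathbf{H},\mathbf{G})$ has dimension at least $KN^2-3M^2\geq 1$, producing a non-trivial family of interference-neutralizing relay precoders.

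For the rank requirement, I would introduce the polynomial
\[
p(\mathbf{H},\mathbf{G},\mathbf{F})\;=\;\prod_{j=0}^{2}\det\!\left(\sum_{k=0}^{K-1}\mathbf{G}_{j,k}\mathbf{F}_k\mathbf{W}_{k,j}\right)
\]
on the incidence variety $\mathcal{I}=\{(\mathbf{H},\mathbf{G},\mathbf{F}):\eqref{zeroforcing_1}\text{--}\eqref{zeroforcing_2}\text{ hold}\}$. If $p$ does not vanish identically on $\mathcal{I}$, then its zero locus is a proper Zariski-closed subvariety, and the projection of $\mathcal{I}\setminus\{p=0\}$ onto the channel coordinates has complement of Lebesgue measure zero. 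Hence for almost every $(\mathbf{H},\mathbf{G})$ a generic $\{\mathbf{F}_k\}\in\mathcal{F}(\mathbf{H},\mathbf{G})$ makes all three effective $M\times M$ desired-signal matrices full rank, fulfilling \eqref{rank} with $2d=M$.

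The main obstacle is therefore to show $p|_{\mathcal{I}}\not\equiv 0$, i.e., to exhibit a single compatible triple $(\mathbf{H},\mathbf{G},\mathbf{F})$ in which the interference-neutralization equations hold and simultaneously all three effective signal matrices have full rank. A natural plan is to pick channel matrices with a structured sparsity pattern aligned with a partition of the $KN$ relay antennas into $K$ independent groups, so that \eqref{zeroforcing_1}--\eqref{zeroforcing_2} decomposes into $K$ uncoupled subsystems and each $\mathbf{F}_k$ can be chosen so that its contribution to the desired-signal matrix at every user is non-degenerate; an alternative is a perturbation argument starting from an easily analyzed configuration inside $\mathcal{I}$ and using the openness of the full-rank condition. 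Once such an instance is verified, the genericity argument above completes the proof, and the symbol-extension technique of Appendix~II handles the case in which $d=M/2$ is not an integer.
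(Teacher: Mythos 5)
Your overall architecture matches the paper's: cast \eqref{zeroforcing_1}--\eqref{zeroforcing_2} as a homogeneous linear system in $\{\mathbf{F}_k\}$ with $3M^2$ equations and $KN^2$ unknowns, invoke $KN^2>3M^2$ for a nontrivial solution space, and then reduce the rank requirement \eqref{rank} to the non-vanishing of a determinant-product polynomial, which holds generically once a single witness instance is exhibited. The paper formalizes exactly this in its Lemma~6, with one extra condition you omit: the witness must also satisfy $\mathrm{rank}(\widehat{\mathbf{K}}')=3M^2$, where $\widehat{\mathbf{K}}'$ is the Kronecker-product coefficient matrix of the linear system. This full-row-rank condition is what lets the solution space be parametrized by polynomials (via Gaussian elimination) in the channel entries uniformly over a probability-one set of channels, so that the determinant product becomes a single polynomial in the channel entries and the free solution parameters; your incidence-variety phrasing would need an irreducibility or fiber-dimension argument to rule out the zero locus of $p$ swallowing entire fibers over a positive-measure set of channels, and the paper's parametrization sidesteps that.

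The genuine gap is that you stop at the step you yourself call ``the main obstacle'': you never actually construct the witness, only name two candidate strategies. Neither is carried out, and the first (block-sparse channels decomposing the system into $K$ uncoupled subsystems) is not obviously compatible with the regime $2M\leq KN$ with a single relay's antennas possibly fewer than needed to handle all three users alone. The paper's construction is concrete and quite different from either of your sketches: take $\widehat{\mathbf{F}}_k=\mathbf{I}_N$ for all $k$, draw $\{\widehat{\mathbf{G}}_{j,k}\}$ at random, and then choose the stacked columns $\bigl[\widehat{\mathbf{H}}_{0,j+1}\mathbf{U}_{j+1,j+2},\widehat{\mathbf{H}}_{0,j+2}\mathbf{U}_{j+2,j+1};\cdots\bigr]\in\mathbb{C}^{KN\times M}$ to lie in the right null space of $\bigl[\widehat{\mathbf{G}}_{j,0},\ldots,\widehat{\mathbf{G}}_{j,K-1}\bigr]\in\mathbb{C}^{M\times KN}$ for each $j$; this null space has dimension $KN-M\geq M$ precisely because $2M\leq KN$ in the stated range of $\frac{M}{N}$. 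Since $\mathbf{U}_{j,j+1}$ and $\mathbf{U}_{j+1,j}$ in \eqref{UV} select the left and right halves of $\widehat{\mathbf{H}}_{k,j}$, these choices assemble into full-column-rank $\widehat{\mathbf{H}}_{k,j}$, the neutralization equations hold by construction, and the randomness of $\{\widehat{\mathbf{G}}_{j,k}\}$ gives both \eqref{rank} and $\mathrm{rank}(\widehat{\mathbf{K}}')=3M^2$ with probability one. Without this (or an equivalent executed construction), your proof is incomplete at its decisive step.
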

\begin{proof}
See Appendix I-C.
\end{proof}

\subsection{Achievable DoF Using Antenna Disablement}
In the preceding subsections, we have established an achievable DoF for $\frac{M}{N} \in \Big[0, \frac{\sqrt{3K}}{3}\Big) \cup \Big[1,\frac{3K+\sqrt{9K^2-12K}}{6}\Big)$. We now follow the antenna disablement approach \cite{DOF2} to establish the achievable DoF for other ranges of $\frac{M}{N}$. Specifically, for $\frac{M}{N} \in \left(0,1\right)$, we disable $N-M$ antennas at each relay. Then, from Lemma \ref{LemmaDoF2}, we see that any DoF $d<\frac{3M^2+2MN^*}{9M+N^*} = \frac{M}{2}$ can be achieved, where $N^*=M$ is the number of active antennas at each relay. The only issue is that $d$ may be not an integer. This issue can be solved by the technique of symbol extension described in Appendix II. Similarly, with symbol extension and antenna disablement, $d<\frac{KN}{3}$ is achievable for $\frac{M}{N} \in \left(\frac{3K+\sqrt{9K^2-12K}}{6},\infty\right)$. 

Combining Lemmas \ref{LemmaDoF1}, \ref{LemmaDoF2}, and \ref{LemmaDoF3}, we conclude that any DoF $d$ satisfying $d < d^*$ is achievable, where
\begin{equation}
d^* =
\begin{cases}
\frac{M}{2}  &        {\frac{M}{N} \in \Big[0,\max\left\{\frac{\sqrt{3K}}{3},1\right\}\Big)}\\
\max\left\{\frac{M}{3} + \frac{5MN}{27M+3N},\frac{\sqrt{3K}N}{6}\right\}      & {\frac{M}{N} \in \Big[\max\left\{\frac{\sqrt{3K}}{3},1\right\}, \frac{9K+\sqrt{81K^2+60K}}{30}\Big)}  \\
\frac{M}{3} + \frac{KN^2}{9M}     & {\frac{M}{N} \in \Big[\frac{9K+\sqrt{81K^2+60K}}{30},\frac{3K+\sqrt{9K^2-12K}}{6}\Big)} \\
\frac{KN}{3}  &      {\frac{M}{N} \in \Big[\frac{3K+\sqrt{9K^2-12K}}{6},\infty\Big)}.
\end{cases}
\end{equation}
With $d_\mathrm{sum}=3d$, we immediately obtain \eqref{achievableDoF0}. This completes  the proof of Theorem 1.

\section{Conclusion and Future Work}
In this paper, we developed a new formalism to analyze the achievable DoF of the symmetric multi-relay MIMO Y channel. Specifically, we adopted the idea of uplink-downlink asymmetric design and proposed a new method to tackle the solvability problem of linear systems with rank constraints. In the proposed design, we also incorporated the techniques of signal alignment, antenna disablement, and symbol extension. An achievable DoF for an arbitrary configuration of $(M, N, K)$ was derived. 

The study of multi-relay MIMO mRCs is still in an initial stage. Based on our work, the following directions will be of interest for future research.

\subsection{Tighter Upper Bounds}
For $\frac{M}{N} \in \Big(\max\left\{\frac{\sqrt{3K}}{3},1\right\}, \frac{3K+\sqrt{9K^2-12K}}{6}\Big)$, our achievable total DoF does not match the full relay-cooperation upper bound in \eqref{upperbound}. We conjecture that the main reason for this mismatch is that the upper bound is too loose in this range of $\frac{M}{N}$. As such, tighter upper bounds are highly desirable to fully characterize the DoF of the symmetric multi-relay MIMO Y channel. This, however, requires careful analysis on the fundamental performance degradation caused by the separation of relays.

\subsection{General Antenna and DoF Setups}
In this paper, we considered the symmetric multi-relay MIMO Y channel, where the numbers of antennas of all user nodes are assumed to be the same. We also assumed a symmetric DoF setting, where each user transmits the same number of independent spatial data streams. The main purpose of these assumptions is to avoid the combinatorial complexity in manipulating signals and interference. The techniques used in this paper can be extended to the cases with asymmetric antenna and DoF setups. However, this results in a DoF achievability problem far more complicated than \eqref{zeroforcing}, since we need to analyze the feasibility of all possible DoF tuples of $(d_{0,1},d_{1,0},d_{1,2},d_{2,1},d_{0,2},d_{2,0})$ under an asymmetric antenna configuration. The optimal DoF region for the single-relay case has been recently reported in \cite{generalMIMOY}. We believe that the techniques used in \cite{generalMIMOY} will provide some insights on deriving the DoF region of the multi-relay case. 

\subsection{Cases with More Users}
Our approach can be extended to multi-relay MIMO Y channels with more than three users. However, we emphasize that such an extension is not trivial. As seen from \cite{Gao, multirelay, MIMO2, yuan}, for MIMO mRCs with more than three users, more signal alignment patterns than pairwise alignment should be exploited to support efficient data exchanges. This implies that in multi-relay MIMO Y channels with more than three users, we need to combine our uplink-downlink asymmetric approach with more intelligent signal alignment strategies. Therefore, the extension to multi-relay MIMO Y channels with more users will be an interesting research topic worthy of future effort.

\begin{appendices}
\section{Proof of Lemmas \ref{LemmaDoF1}-\ref{LemmaDoF3}}
\label{Proof of Lemma 1 and Lemma 2}
\subsection{Proof of Lemma \ref{LemmaDoF1}}
\label{Proof of Lemma 1}
We need to prove that for $\frac{M}{N} \in \Big[\frac{9K+\sqrt{81K^2+60K}}{30}, \frac{3K+\sqrt{9K^2-12K}}{6}\Big)$ and $d<\frac{M}{3} + \frac{KN^2}{9M}$, there exist $\{\mathbf{U}_{j,j'}\}$ and $\{\mathbf{F}_k\}$ satisfying \eqref{pairwisealignment}, \eqref{pneutralize}, and \eqref{prank} with probability one. The main steps of the proof are presented as follows:
\begin{itemize}
\item Show that the signal alignment \eqref{pairwisealignment} can be performed and $d'$ in \eqref{d'} is well-defined.
\item For $M,N$ and $d$ in the given ranges, construct a set $\mathcal{T}_{M,N,d}$ of channel realizations such that a randomly generated channel realization belongs to $\mathcal{T}_{M,N,d}$ with probability one.
\item Prove that for almost all elements in $\mathcal{T}_{M,N,d}$, there exist $\{\mathbf{U}_{j,j'}\}$ and $\{\mathbf{F}_k\}$ satisfying \eqref{pairwisealignment}, \eqref{pneutralize}, and \eqref{prank}. \footnote{Although the term ``satisfying (24)" appears both here and in (54), the conditions are different. In (54), we require that there exist $\{\mathbf{U}_{j,j'}\}$ satisfying (24) and $\mathrm{rank}(\mathbf{K})=3d'M$, while here we require that there exist $\{\mathbf{U}_{j,j'}\}$ and $\{\mathbf{F}_k\}$ satisfying (24), (31), and (32). It is possible that for some channel realization, there exist $\{\mathbf{U}_{j,j'}\}$  satisfying (24), but there do not exist $\{\mathbf{F}_k\}$, together with these $\{\mathbf{U}_{j,j'}\}$, satisfying (31) and (32).}
\end{itemize}

We first show that the signal alignment \eqref{pairwisealignment} can be performed. For $\frac{M}{N} \geq \frac{9K+\sqrt{81K^2+60K}}{30}$, we have
{\setlength{\abovedisplayskip}{4pt}
\setlength{\belowdisplayskip}{4pt}
\begin{equation}
2M- \left(\frac{M}{3} + \frac{KN^2}{9M}\right) \geq NK.
\end{equation}}
\!\!Further, as $d<\frac{M}{3} + \frac{KN^2}{9M}$, we obtain 
{\setlength{\abovedisplayskip}{4pt}
\setlength{\belowdisplayskip}{4pt}
\begin{equation}
2M -d > 2M- \left(\frac{M}{3} + \frac{KN^2}{9M}\right) \geq NK.
\end{equation}}
\!\!Therefore, \eqref{alignmentcondition} is met, and so there exist full-column-rank $\{\mathbf{U}_{j,j'}\}$ satisfying \eqref{pairwisealignment} with probability one.

We next show that $d'$ in \eqref{d'} is well-defined. That is, $0 \leq d' = 3d-M \leq d$ holds for $d$ chosen sufficiently close to $\frac{M}{3}+\frac{KN^2}{9M}$. For $\frac{M}{N} \in \left[\frac{9K+\sqrt{81K^2+60K}}{30}, \frac{3K+\sqrt{9K^2-12K}}{6} \right]$, together with $d<\frac{M}{3} + \frac{KN^2}{9M}$ and $K\geq2$, we obtain
\begin{equation}
d'-d = 2d -M <\frac{2M}{3} + \frac{2KN^2}{9M} - M = \frac{2KN^2 - 3M^2}{9M} < 0,
\end{equation}
where the last step holds by noting $\frac{M}{N}>\frac{3}{5}K > \frac{\sqrt{2K}}{3}$ for $K \geq 2$. On the other hand, as
\begin{equation}
3\times \left(\frac{M}{3} + \frac{KN^2}{9M}\right) -M =\frac{KN^2}{3M} >0,
\end{equation}
we can always choose $d$ close to $\frac{M}{3} + \frac{KN^2}{9M}$ to ensure $d' = 3d-M>0$. We henceforth always assume that $d$ is appropriately chosen such that $0\leq d' \leq d$.

We now consider step 2 of the proof. Denote the overall channel $\mathbf{T}$ of the symmetric multi-relay MIMO Y channel by
{\setlength{\abovedisplayskip}{4pt}
\setlength{\belowdisplayskip}{4pt}
\begin{equation}
\label{channel_realization}
\mathbf{T} = \left(\mathbf{H},\mathbf{G}\right) \in \mathbb{C}^{KN\times 3M} \times \mathbb{C}^{3M \times KN}
\end{equation}}
\!\!where
{\setlength{\abovedisplayskip}{4pt}
\setlength{\belowdisplayskip}{4pt}
\begin{equation}
\label{uplink_channel_realization}
\mathbf{H} = \left[\begin{array}{ccc}
    \mathbf{H}_{0,0} &  \mathbf{H}_{0,1} & \mathbf{H}_{0,2}\\
    \vdots & \vdots & \vdots\\
    \mathbf{H}_{K-1,0} & \mathbf{H}_{K-1,1} & \mathbf{H}_{K-1,2}\\
\end{array}
\right] ,\quad
\mathbf{G} = \left[\begin{array}{cccc}
    \mathbf{G}_{0,0} &\cdots& \mathbf{G}_{0,K-1}\\
    \mathbf{G}_{1,0} &\cdots& \mathbf{G}_{1,K-1}\\
    \mathbf{G}_{2,0} &\cdots& \mathbf{G}_{2,K-1}\\
\end{array}
\right].
\end{equation}}
\!\!Then, we rewrite \eqref{pneutralize} using Kronecker product as
{\setlength{\abovedisplayskip}{4pt}
\setlength{\belowdisplayskip}{4pt}
\begin{equation}
\label{Kform}
\mathbf{K}\mathbf{f} = \mathbf{0},
\end{equation}}
\!\!where
\begin{subequations}
\begin{equation}
\label{K}
\mathbf{K}\!=\!\!\left[\!\!\!
\begin{array}{cccc}
    \left(\mathbf{H}_{0,1}\mathbf{U}^{(L)}_{1,2}\right)^{T}\!\!\! \otimes\! \mathbf{G}_{0,0} \!&\!  \left(\mathbf{H}_{1,1}\mathbf{U}^{(L)}_{1,2}\right)^{T}\!\!\!\otimes \!\mathbf{G}_{0,1} \!&\! \!\cdots\! \!&\!  \left(\mathbf{H}_{K-1,1}\mathbf{U}^{(L)}_{1,2}\right)^{T}\!\!\!\otimes\! \mathbf{G}_{0,K-1}\\
       \left(\mathbf{H}_{0,2}\mathbf{U}^{(L)}_{2,0}\right)^{T}\!\!\!\otimes\! \mathbf{G}_{1,0} \!&\! \left(\mathbf{H}_{1,2}\mathbf{U}^{(L)}_{2,0}\right)^{T}\!\!\otimes\! \mathbf{G}_{1,1} \!&\! \!\cdots\! \!&\! \left(\mathbf{H}_{K-1,2}\mathbf{U}^{(L)}_{2,0}\right)^{T}\!\!\!\otimes \!\mathbf{G}_{1,K-1}\\
      \left(\mathbf{H}_{0,0}\mathbf{U}^{(L)}_{0,1}\right)^{T}\!\!\!\otimes \!\mathbf{G}_{2,0} \!&\! \left(\mathbf{H}_{1,0}\mathbf{U}^{(L)}_{0,1}\right)^{T}\!\!\!\otimes \!\mathbf{G}_{2,1} \!&\! \!\cdots\!\! &\! \left(\mathbf{H}_{K-1,0}\mathbf{U}^{(L)}_{0,1}\right)^{T}\!\!\!\otimes\! \mathbf{G}_{2,K-1}
\end{array}
\!\!\right]\!\!\in \mathbb{C}^{3d'M\times KN^2}
\end{equation}
\begin{equation}
\mathbf{f} = \left[
\begin{array}{cccc}
\mathrm{vec}(\mathbf{F}_0)^T&
\mathrm{vec}(\mathbf{F}_1)^T&
\cdots&
\mathrm{vec}(\mathbf{F}_{K-1})^T
\end{array}
\right]^T \in \mathbb{C}^{KN^2 \times 1}.
\end{equation}
\end{subequations}

We are now ready to define the set $\mathcal{T}_{M,N,d}$. For $\frac{M}{N} \in \Big[\frac{9K+\sqrt{81K^2+60K}}{30}, \frac{3K+\sqrt{9K^2-12K}}{6}\Big)$ and $d<\frac{M}{3} + \frac{KN^2}{9M}$, define 
{\setlength{\abovedisplayskip}{3pt}
\setlength{\belowdisplayskip}{4pt}
\begin{equation}
\label{DefTMNd}
\mathcal{T}_{M,N,d} = \left\{\mathbf{T}\quad\!\!\left|\quad\!\!
\begin{aligned}
&\text{All $\mathbf{T}$ satisfying:} \\[-0.3cm]
&\text{1) }\mathrm{rank}\left(\mathbf{K}_{j}\right) = KN,\forall j;\\[-0.3cm]
&\text{2) }\text{there exist $\{\mathbf{U}_{j,j'}\}$ satisfying \eqref{pairwisealignment} and $\mathrm{rank}\left(\mathbf{K}\right) = 3d'M$}
\end{aligned}\right.
\right\}
\end{equation}}
\!\!\!where $\mathbf{K}_j$ is defined in \eqref{K_j}. We claim that a randomly generated $\mathbf{T}$ belongs to $\mathcal{T}_{M,N,d}$ with probability one. Recall that the entries of $\mathbf{T}$ are drawn independently from a continuous distribution. Since $\mathbf{K}_{j}$ is a wide matrix, it is of full row rank ($=KN$) with probability one. We next show that for a random $\mathbf{T}$ and full-column-rank $\{\mathbf{U}_{j,j'}\}$ satisfying \eqref{pairwisealignment}, $\mathbf{K}$ in \eqref{K} is of full row rank with probability one. To see this, we first note that $\mathbf{K}$ is a wide matrix since
{\setlength{\abovedisplayskip}{3pt}
\setlength{\belowdisplayskip}{3pt}
\begin{align}
\label{wideK}
KN^2-3d'M & = KN^2+3M^2-9dM > KN^2+3M^2 - 9M\left(\frac{M}{3} + \frac{KN^2}{9M}\right) = 0.
\end{align}}
\!\!Second, from the channel randomness, we have $\mathrm{rank}\left(\mathbf{H}_{k,j}\mathbf{U}^{(L)}_{j,j'}\right)=d'$ and $\mathrm{rank}\left(\mathbf{G}_{j',k}\right)=N$. Then $\left(\mathbf{H}_{k,j}\mathbf{U}^{(L)}_{j,j'}\right)^{T}\!\!\otimes \mathbf{G}_{j',k}$ is of rank $d'N$. Each $d'M \times KN^2$ block-row of $\mathbf{K}$ consists of $K$ submatrices in the form of $\left(\mathbf{H}_{k,j}\mathbf{U}^{(L)}_{j,j'}\right)^{T}\!\!\otimes \mathbf{G}_{j',k}$. From the channel randomness, the rank of each block-row is given by $\min\left\{d'M, Kd'N\right\} = d'M$, since $\frac{M}{N} \leq \frac{3K+\sqrt{9K^2-12K}}{6}<K$. Further, by noting that the three block-rows of $\mathbf{K}$ are statistically independent of each other, we conclude that $\mathbf{K}$ is of full row rank ($=3d'M$) with probability one. 

We now consider the last step, i.e., to show that for almost all $\mathbf{T}$ in $\mathcal{T}_{M,N,d}$, there exist $\{\mathbf{U}_{j,j'}\}$ and $\{\mathbf{F}_k\}$ satisfying \eqref{pairwisealignment}, \eqref{pneutralize}, and \eqref{prank}. To proceed, we present a useful lemma below.

\begin{lemma}
\label{ProveLemma1}
For $d <\frac{M}{3} + \frac{KN^2}{9M}$ and $\frac{M}{N} \in \Big[\frac{9K+\sqrt{81K^2+60K}}{30}, \frac{3K+\sqrt{9K^2-12K}}{6}\Big)$, assume that there exist a certain element $\widehat{\mathbf{T}} \in \mathcal{T}_{M,N,d}$, full-row-rank $\{\widehat{\mathbf{U}}_{j,j'}\}$, and relay processing matrices $\{\widehat{\mathbf{F}}_k\}$ such that \eqref{pairwisealignment}, \eqref{pneutralize}, and \eqref{prank} hold.
Then for random $\mathbf{T} \in \mathcal{T}_{M,N,d}$, there exist $\{\mathbf{U}_{j,j'}\}$ and $\{\mathbf{F}_k\}$ satisfying \eqref{pairwisealignment}, \eqref{pneutralize}, and \eqref{prank} with probability one.
\end{lemma}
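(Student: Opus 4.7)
The plan is to exploit that conditions \eqref{pairwisealignment}, \eqref{pneutralize}, and \eqref{prank} can each be phrased as polynomial (or rank) conditions in the entries of $\mathbf{T}$, $\{\mathbf{U}_{j,j'}\}$, and $\{\mathbf{F}_k\}$. The statement is then a standard generic-solvability result: if one channel realization witnesses solvability, then almost every realization is solvable, because the failure set lies in a proper algebraic subvariety, which carries zero Lebesgue mass under any continuous channel distribution.

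First, I would parameterize the admissible variables locally near $\widehat{\mathbf{T}}$. The alignment constraint \eqref{pairwisealignment} is the linear homogeneous system $\mathbf{K}_j [\mathbf{U}_{j,j+1}^T, \mathbf{U}_{j+1,j}^T]^T = \mathbf{0}$; for $\mathbf{T} \in \mathcal{T}_{M,N,d}$, $\mathbf{K}_j$ has full row rank $KN$, so its null space has fixed dimension $2M - KN$. Fixing a $KN \times KN$ submatrix of $\mathbf{K}_j$ that is invertible at $\widehat{\mathbf{T}}$, a basis of $\mathrm{null}(\mathbf{K}_j)$ is expressed as a rational function of $\mathbf{T}$ on a Zariski-open neighborhood of $\widehat{\mathbf{T}}$; the pair $(\mathbf{U}_{j,j+1}, \mathbf{U}_{j+1,j})$ is then parametrized by $d$ coefficient vectors $\boldsymbol{\alpha}_j$ in this basis. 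The same treatment applies to \eqref{pneutralize}: since $\mathbf{T} \in \mathcal{T}_{M,N,d}$ gives $\mathrm{rank}(\mathbf{K}) = 3d'M$, the null space of $\mathbf{K}$ is parameterized rationally in $(\mathbf{T}, \{\boldsymbol{\alpha}_j\})$ by $KN^2 - 3d'M$ free scalars $\boldsymbol{\beta}$. Collect all free parameters into a vector $\boldsymbol{\theta} = (\{\boldsymbol{\alpha}_j\}, \boldsymbol{\beta})$.

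Second, I would encode the rank condition \eqref{prank} as polynomial nonvanishing. The matrix inside \eqref{prank} is $M\times M$, so $\mathrm{rank} = M$ is equivalent to its determinant being nonzero. After clearing the denominators inherited from the rational parameterization (which are nonzero at $\widehat{\mathbf{T}}$ and hence in an open neighborhood), for each $j\in\{0,1,2\}$ the determinant is a polynomial $P_j(\mathbf{T}, \boldsymbol{\theta})$, and the combined condition becomes $P(\mathbf{T}, \boldsymbol{\theta}) = \prod_{j=0}^{2} P_j(\mathbf{T}, \boldsymbol{\theta}) \neq 0$. By the hypothesis of the lemma, $P(\widehat{\mathbf{T}}, \widehat{\boldsymbol{\theta}}) \neq 0$ for the parameter $\widehat{\boldsymbol{\theta}}$ encoding $(\widehat{\mathbf{U}}, \widehat{\mathbf{F}})$, so $P$ is not identically zero as a polynomial in $(\mathbf{T}, \boldsymbol{\theta})$. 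The set of $\mathbf{T}$ for which $P(\mathbf{T}, \boldsymbol{\theta})$ vanishes identically in $\boldsymbol{\theta}$ is the common zero locus of all $\boldsymbol{\theta}$-coefficients of $P$; since $P$ is not identically zero and $\widehat{\mathbf{T}}$ lies outside this locus, at least one such coefficient is a nonzero polynomial in $\mathbf{T}$, so the locus is a proper algebraic subvariety and hence has Lebesgue measure zero. Thus, for almost every $\mathbf{T}$ there exists $\boldsymbol{\theta}$ with $P(\mathbf{T}, \boldsymbol{\theta}) \neq 0$, which in turn gives $\{\mathbf{U}_{j,j'}\}$ and $\{\mathbf{F}_k\}$ meeting \eqref{pairwisealignment}, \eqref{pneutralize}, and \eqref{prank}.

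The main obstacle I anticipate is the bookkeeping around the local rational parameterization. Since the null spaces of $\mathbf{K}_j$ and $\mathbf{K}$ do not form globally trivialized vector bundles over $\mathcal{T}_{M,N,d}$, the argument above works only on the chart where the selected submatrices are invertible. To reach a global measure-zero statement, one covers $\mathcal{T}_{M,N,d}$ by finitely many such charts (indexed by choices of invertible submatrices), runs the polynomial argument on each to obtain a measure-zero failure set per chart, and takes the finite union. A minor additional check is that the denominators appearing in the rational parameterization—specifically the determinants of the chosen submatrices of $\mathbf{K}_j$ and $\mathbf{K}$—are themselves polynomial in $\mathbf{T}$ and nonzero on a dense open set, so their vanishing contributes only another measure-zero exclusion. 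Combining these, the assertion follows.
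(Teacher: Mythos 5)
Your proposal is correct and follows essentially the same route as the paper: parametrize the solutions of the alignment system and the neutralization system by free scalars via (local) rational/polynomial expressions in $\mathbf{T}$, express the rank condition \eqref{prank} as nonvanishing of a determinant product that is polynomial in $\mathbf{T}$ and the free parameters, use the witness $\widehat{\mathbf{T}}$ to conclude this polynomial is not identically zero, and invoke the measure-zero property of proper algebraic subvarieties. Your explicit handling of the chart-dependence of the null-space parameterization (finitely many charts indexed by invertible submatrices) is a welcome tightening of a step the paper treats informally via ``Gaussian elimination,'' but it does not change the substance of the argument.
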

\begin{proof}
Consider a random $\mathbf{T} \in \mathcal{T}_{M,N,d}$. By the definition of $\mathcal{T}_{M,N,d}$, there exist $\{\mathbf{U}_{j,j'}\}$ satisfying \eqref{pairwisealignment} and $\mathrm{rank}\left(\mathbf{K}\right) = 3d'M$.  Using Gaussian elimination, the general solution of \eqref{Kform} can be written by
{\setlength{\abovedisplayskip}{4pt}
\setlength{\belowdisplayskip}{4pt}
\begin{align}
\label{general_solution}
\mathbf{f} = \sum_{a=0}^{A-1}\alpha_a \mathbf{f}_a,
\end{align}}
\!\!where $\alpha_1,\alpha_2,\cdots,\alpha_A$ are free variables with $A = KN^2-3d'M$, and $\mathbf{f}_a\in\mathbb{C}^{2NN'}$, $a = 0, \cdots, A-1$, span the right null space of $\mathbf{K}$.
Recall that Gaussian elimination involves the following arithmetic operations: addition, subtraction, multiplication and division. This implies that each entry of $\mathbf{f}_a$ is a rational function of the entries of $\{\mathbf{G}_{j,k}\}$ and $\{\mathbf{H}_{k,j}\mathbf{U}^{(L)}_{j,j'}\}$. Thus, with proper scaling, we can always express each entry of $\mathbf{f}$ as a finite-degree polynomial of the entries of  $\{\mathbf{G}_{j,k}\}$ and $\{\mathbf{H}_{k,j}\mathbf{U}^{(L)}_{j,j'}\}$. Following similar arguments, since $\{\mathbf{U}_{j,j'}\}$ are designed to satisfy $\eqref{K_j}$ and each $\mathbf{K}_j$ is of full row rank, entries of matrices $\{\mathbf{U}_{j,j'}\}$ can be represented by polynomials of the entries of $\{\mathbf{H}_{k,j}\}$ and $\{\beta_b\}$, where $\beta_0,\beta_1,\cdots,\beta_{B-1}$ are free variables with $B = 3(2M-KN)$. This implies that each entry of $\mathbf{f}$ satisfying \eqref{Kform} can be represented by a finite-degree polynomial of the entries of $\mathbf{T}$, $\{\alpha_a\}$, and $\{\beta_b\}$. Recall that \eqref{Kform} is the Kronecker product form of \eqref{pneutralize} and $\mathbf{f}$ consists of the vectorizations of $\{\mathbf{F}_k\}$. Therefore, entries of $\{\mathbf{F}_k\}$ satisfying \eqref{pneutralize} can be represented by finite-degree polynomials of the entries of $\mathbf{T}$, $\{\alpha_a\}$, and $\{\beta_b\}$.

Now consider the following determinant product:
{\setlength{\abovedisplayskip}{4pt}
\setlength{\belowdisplayskip}{4pt}
\begin{small}
\begin{align}
\label{D}
D_{M,N,d}=\prod_{j=0}^2\mathrm{det}\left(\sum_{k=0}^{K-1}\mathbf{G}_{j,k}\mathbf{F}_k\left[\mathbf{W}_{k,j},\mathbf{H}_{k,j+1}\mathbf{U}_{j+1,j+2}^{(R)}\right]\right)\!\!.
\end{align}
\end{small}}
\!\!\!Note that $D_{M,N,d}\not = 0$ if and only if the matrix $\sum_{k=0}^{K-1}\mathbf{G}_{j,k}\mathbf{F}_k\left[\mathbf{W}_{k,j},\mathbf{H}_{k,j+1}\mathbf{U}_{j+1,j+2}^{(R)}\right]$ is of full rank for $j = 0,1,2$, or equivalently, \eqref{prank} holds. Therefore, to prove Lemma \ref{ProveLemma1}, it suffices to show that for a random $\mathbf{T}\in \mathcal{T}_{M,N,d}$, there exist $\{\mathbf{U}_{j,j'}\}$ and $\{\mathbf{F}_k\}$ satisfying \eqref{pairwisealignment} and $\eqref{pneutralize}$, such that $D_{M,N,d}\not =0$ with probability one. 

To this end, we first note that $D_{M,N,d}$ is a finite-degree polynomial with respect to entries of $\mathbf{T}$, $\{\alpha_a\}$, and $\{\beta_b\}$, i.e., $D_{M,N,d}$ can be represented by
{\setlength{\abovedisplayskip}{4pt}
\setlength{\belowdisplayskip}{4pt}
\begin{small}
\begin{align}
\label{D_polynomial}
D_{M,N,d} = \sum^{T_\mathrm{max}}_{t=1}p_t\left(\mathbf{T}\right)g_t\left(\{\alpha_a\},\{\beta_b\}\right)
\end{align}
\end{small}}
\!\!where $T_\mathrm{max}$ is a finite integer, $p_t\left(\cdot\right)$ is a polynomial of the entries of $\mathbf{T}$, and $g_t\left(\cdot, \cdot\right)$ is a monomial of $\{\alpha_a\}$, $\{\beta_b\}$. Note that $g_t(\cdot, \cdot) \not= g_{t'}(\cdot,\cdot)$ for $t\not=t'$. By assumption, there exist $\widehat{\mathbf{T}}$ and $\{\widehat{\mathbf{U}}_{j,k},\widehat{\mathbf{F}}_k\}$ satisfying \eqref{pairwisealignment}, \eqref{pneutralize}, and \eqref{prank}, implying that there exist $\widehat{\mathbf{T}}$, $\{\widehat{\alpha}_a\}$, and $\{\widehat{\beta}_b\}$ such that $D_{M,N,d} \not = 0$. Thus, $D_{M,N,d}$ is a non-zero polynomial. Let $\mathcal{I}$ be the index set such that $p_t(\cdot)$, $t \in \mathcal{I}$, is a non-zero polynomial. Denote by $\mathcal{V}$ the solution set of the polynomial system: $p_t(\mathbf{T})=0$, $t\in \mathcal{I}$. From algebraic geometry, $\mathcal{V}$ has Lebesgue measure zero in $\mathbb{C}^{KN\times 3M} \times \mathbb{C}^{3M \times KN}$. That is, for a random generated  $\mathbf{T}$, the probability of $\mathbf{T}\in \mathcal{V}$ is zero. Thus, for a random $\mathbf{T}\in \mathcal{T}_{M,N,d}$, there is at least one $p_t(\mathbf{T}) \not = 0$, and $D_{M,N,d}$ in \eqref{D_polynomial} is a non-zero polynomial of $\{\alpha_a\}$ and $\{\beta_b\}$ with probability one. Therefore, we can always find $\{\alpha_a\}$ and $\{\beta_b\}$ such that $D_{M,N,d} \not= 0$. That is, there exist $\{\mathbf{U}_{j',j}\}$ and $\{\mathbf{F}_k\}$ satisfying \eqref{pairwisealignment}, \eqref{pneutralize}, and \eqref{prank}, which concludes the proof of Lemma \ref{ProveLemma1}.
\end{proof}

To invoke Lemma \ref{ProveLemma1}, we need to show the existence of a certain $\widehat{\mathbf{T}}\in \mathcal{T}_{M,N,d}$, $\{\widehat{\mathbf{U}}_{j,j'}\}$, and $\{\widehat{\mathbf{F}}_k\}$ satisfying \eqref{pairwisealignment}, \eqref{pneutralize}, and \eqref{prank}. To this end, we set $\widehat{\mathbf{F}}_k = \mathbf{I}_N$ for $k=0,\cdots, K-1$, and choose $\{\widehat{\mathbf{G}}_{j,k}\}$ and $\{\widehat{\mathbf{H}}_{k,j}\widehat{\mathbf{U}}^{(R)}_{j,j+1}\}$ to be random matrices with the entries independently drawn from a continuous distribution. Then, we choose full-rank matrices $\{\widehat{\mathbf{H}}_{k,j}\widehat{\mathbf{U}}^{(L)}_{j,j+1}\}$ to satisfy
{\setlength{\abovedisplayskip}{3pt} 
\setlength{\belowdisplayskip}{5pt}
\begin{small}
\begin{align}
\label{specificH}
\underbrace{\left[\begin{array}{cccc}
    \widehat{\mathbf{G}}_{j,0} &  \widehat{\mathbf{G}}_{j,1} & \cdots & \widehat{\mathbf{G}}_{j,K-1}
\end{array}\right]}_{M \times KN} \underbrace{\left[\begin{array}{c} \widehat{\mathbf{H}}_{0,j+1} \widehat{\mathbf{U}}^{(L)}_{j+1,j+2}\\  \vdots \\  \widehat{\mathbf{H}}_{K-1,j+1} \widehat{\mathbf{U}}^{(L)}_{j+1,j+2} \end{array}\!\!\right]}_{KN\times d'}  = \mathbf{0}.
\end{align}
\end{small}}
\!\!It can be verified that $KN \geq 3d$ for $d<\frac{M}{3} + \frac{KN^2}{9M}$, $\frac{M}{N} \in \left[\frac{9K+\sqrt{81K^2+60K}}{30},\frac{3K+\sqrt{9K^2-12K}}{6}\right]$. Then
{\setlength{\abovedisplayskip}{3pt}
\setlength{\belowdisplayskip}{3pt}
\begin{equation}
\label{constructcondition}
KN-M \geq 3d-M = d',
\end{equation}}
\!\!implying that the null space of the $M \times KN$ matrix in \eqref{specificH} has at least $d'$ dimensions. Thus full-column-rank $\{\widehat{\mathbf{H}}_{k,j}\widehat{\mathbf{U}}^{(L)}_{j,j+1}\}$ satisfying \eqref{specificH} exist with probability one. Based on the chosen $\{\widehat{\mathbf{H}}_{k,j}\widehat{\mathbf{U}}^{(L)}_{j,j+1}\}$ and $\{\widehat{\mathbf{H}}_{k,j}\widehat{\mathbf{U}}^{(R)}_{j,j+1}\}$, we can readily determine the values of $\{\widehat{\mathbf{H}}_{k,j}\}$ and $\{\widehat{\mathbf{U}}_{j,j+1}\}$ (not necessarily unique). With $\eqref{pairwisealignment}$, $\{\widehat{\mathbf{U}}_{j+1,j}\}$ are also determined. Finally, $\widehat{\mathbf{T}}$ is determined by $\{\widehat{\mathbf{H}}_{k,j},\widehat{\mathbf{G}}_{j,k}\}$. It can be verified that the constructed $\widehat{\mathbf{T}}$ belongs to $\mathcal{T}_{M,N,d}$ with probability one. 

We now show that the above constructed $\widehat{\mathbf{T}}$, $\{\widehat{\mathbf{F}}_k\}$, and $\{\widehat{\mathbf{U}}_{j,j'}\}$ satisfy \eqref{pairwisealignment}, \eqref{pneutralize}, and \eqref{prank} with probability one. First, by construction, \eqref{pairwisealignment} is automatically met. Further, as $\widehat{\mathbf{F}}_k = \mathbf{I}_N$, condition \eqref{pneutralize} reduces to \eqref{specificH} which holds again by construction. Thus \eqref{pneutralize} holds. To check \eqref{prank}, it suffices to consider the case $j=0$ due to symmetry. Note that $\{\widehat{\mathbf{W}}_{k,0}\}$ are determined by $\{\widehat{\mathbf{H}}_{k,0}\widehat{\mathbf{U}}^{(L)}_{0,2}\}$ and $\{\widehat{\mathbf{H}}_{k,0}\widehat{\mathbf{U}}^{(L)}_{0,1}\}$, which only depend on $\{\widehat{\mathbf{G}}_{1,k}\}$ and $\{\widehat{\mathbf{G}}_{2,k}\}$. That is, $\{\widehat{\mathbf{W}}_{k,0}\}$ are not functions of $\{\widehat{\mathbf{G}}_{0,k}\}$. Moreover, both $\{\widehat{\mathbf{G}}_{0,k}\}$ and $\{\widehat{\mathbf{W}}_{k,0}\}$ are independent of $\{\widehat{\mathbf{H}}_{k,1}\widehat{\mathbf{U}}^{(R)}_{1,2}\}$. Therefore, 
{\setlength{\abovedisplayskip}{3pt}
\setlength{\belowdisplayskip}{3pt}
\begin{small}
\begin{equation}
\mathrm{rank}\left(\widehat{\mathbf{G}}_{0,k}\widehat{\mathbf{F}}_k\left[\widehat{\mathbf{W}}_{k,0},\widehat{\mathbf{H}}_{k,1}\widehat{\mathbf{U}}^{(R)}_{1,2}\right]\right) = \min\{M,N\} = N
\end{equation}
\end{small}}
\!\!with probability one, where the first equality follows from $\widehat{\mathbf{F}}_{k} \in \mathbb{C}^{N \times N}$, $\widehat{\mathbf{G}}_{0,k} \in \mathbb{C}^{M \times N}$, and $\left[\widehat{\mathbf{W}}_{k,0},\widehat{\mathbf{H}}_{k,1}\widehat{\mathbf{U}}^{(R)}_{1,2}\right] \in \mathbb{C}^{N \times M}$, while the second equality follows by noting that $M>N$ for $\frac{M}{N} \in \left[\frac{9K+\sqrt{81K^2+60K}}{30},\frac{3K+\sqrt{9K^2-12K}}{6}\right]$.
Moreover, as $M<KN$, we obtain
{\setlength{\abovedisplayskip}{4pt}
\setlength{\belowdisplayskip}{4pt}
\begin{small}
\begin{equation}
\mathrm{rank}\left(\sum^{K-1}_{k=0}\widehat{\mathbf{G}}_{0,k}\widehat{\mathbf{F}}_k\left[\widehat{\mathbf{W}}_{k,0},\widehat{\mathbf{H}}_{k,1}\widehat{\mathbf{U}}^{(R)}_{1,2}\right]\right) = \mathrm{min}\{KN,M\} =  M
\end{equation}
\end{small}}
\!\!with probability one, and so \eqref{prank} is met.

To summarize, the above constructed $\widehat{\mathbf{T}}$, $\{\widehat{\mathbf{U}}_{j,j'}\}$, and $\{\widehat{\mathbf{F}}_k\}$ satisfy \eqref{pairwisealignment}, \eqref{pneutralize}, and \eqref{prank} with probability one. Together with Lemma \ref{ProveLemma1}, we complete the proof of Lemma \ref{LemmaDoF1}.

\subsection{Proof of Lemma \ref{LemmaDoF2}}
The proof of Lemma \ref{LemmaDoF2} closely follows that of Lemma \ref{LemmaDoF1}. The main steps of the proof are presented as follows:
\begin{itemize}
\item Show that the signal alignment \eqref{alignment} can be performed and $d'$ in \eqref{d'} is well-defined.
\item For $\frac{M}{N} \in \Big[1, \frac{9K+\sqrt{81K^2+60K}}{30}\Big)$ and $d<\frac{3M^2+2MN}{9M+N}$, construct a set $\tilde{\mathcal{T}}_{M,N,d}$ such that a randomly generated channel realization belongs to $\tilde{\mathcal{T}}_{M,N,d}$ with probability one.
\item Prove that for almost all $\mathbf{T} \in \tilde{\mathcal{T}}_{M,N,d}$, there exist $\{\mathbf{U}_{j,j'}\}$ and $\{\tilde{\mathbf{F}}_k\}$ satisfying \eqref{alignment} and \eqref{pneutralize'}.
\end{itemize}

We first show that the signal alignment in \eqref{alignment} can be performed for $\frac{M}{N} \in \Big[1, \frac{9K+\sqrt{81K^2+60K}}{30}\Big)$ and $d<\frac{3M^2+2MN}{9M+N}$. From the discussion in Section IV-C, it suffices to show that $N'$ is well defined. That is, for $d$ close to $\frac{3M^2+2MN}{9M+N}$, $N'$ defined in \eqref{N'} satisfies $0\leq N' \leq N$. Too see this, note that 
{\setlength{\abovedisplayskip}{4pt}
\setlength{\belowdisplayskip}{4pt}
\begin{small}
\begin{equation}
N'  = \frac{2M-d}{K} > \frac{2M-\frac{3M^2+2MN}{9M+N}}{K} = \frac{15M^2}{K(9M+N)} > 0,
\end{equation}
\end{small}}
\!\!implying that $N'>0$ for any $d<\frac{3M^2+2MN}{9M+N}$.
Further, for $1 \leq \frac{M}{N}< \frac{9K+\sqrt{81K^2+60K}}{30}$, we have
{\setlength{\abovedisplayskip}{4pt}
\setlength{\belowdisplayskip}{4pt}
\begin{small}
\begin{equation}
\frac{2M-\frac{3M^2+2MN}{9M+N}}{K} < N.
\end{equation}
\end{small}}
\!\!\!Therefore, we can always choose $d$ close to $\frac{3M^2+2MN}{9M+N}$ to ensure $0< N' = \frac{2M-d}{K} \leq N$. As analogous to Appendix \ref{Proof of Lemma 1 and Lemma 2}, we can also verify that for $\frac{M}{N} \in \Big[1, \frac{9K+\sqrt{81K^2+60K}}{30}\Big)$, we can choose $d$ close to $\frac{3M^2+2MN}{9M+N}$ such that $0 \leq d' \leq d$. 

We now consider step 2 of the proof. For $\frac{M}{N} \in \Big[1, \frac{9K+\sqrt{81K^2+60K}}{30}\Big)$ and $d<\frac{3M^2+2MN}{9M+N}$, define
{\setlength{\abovedisplayskip}{4pt}
\setlength{\belowdisplayskip}{4pt}
\begin{equation}
\tilde{\mathcal{T}}_{M,N,d} = \left\{\mathbf{T}\quad\!\!\left|\quad\!\!
\begin{aligned}
& \text{All $\mathbf{T}$ satisfying:}\\[-0.3cm]
& \text{1) } \mathrm{rank}(\tilde{\mathbf{K}}_{j}) = KN',\forall j;\\[-0.3cm]
& \text{2) } \text{there exist $\{\mathbf{U}_{j,j'}\}$ satisfying \eqref{alignment} and $\mathrm{rank}(\tilde{\mathbf{K}}) = 3d'M$}
 \end{aligned}\right.\right\}
\end{equation}}
\!\!where $\tilde{\mathbf{K}}_j$ is defined in \eqref{K'_j} and
\begin{small}
\begin{align}
\label{K'}
\tilde{\mathbf{K}}=\left[
\begin{array}{cccc}
    \left(\tilde{\mathbf{H}}_{0,1}\mathbf{U}^{(L)}_{1,2}\right)^{T}\!\! \otimes \mathbf{G}_{0,0} &  \left(\tilde{\mathbf{H}}_{1,1}\mathbf{U}^{(L)}_{1,2}\right)^{T}\!\!\otimes \mathbf{G}_{0,1} & \cdots &  \left(\tilde{\mathbf{H}}_{K-1,1}\mathbf{U}^{(L)}_{1,2}\right)^{T}\!\!\otimes \mathbf{G}_{0,K-1}\\
       \left(\tilde{\mathbf{H}}_{2,0}\mathbf{U}^{(L)}_{2,0}\right)^{T}\!\!\otimes \mathbf{G}_{1,0} & \left(\tilde{\mathbf{H}}_{1,2}\mathbf{U}^{(L)}_{2,0}\right)^{T}\!\!\otimes \mathbf{G}_{1,1} & \cdots & \left(\tilde{\mathbf{H}}_{K-1,2}\mathbf{U}^{(L)}_{2,0}\right)^{T}\!\!\otimes \mathbf{G}_{1,K-1}\\
      \left(\tilde{\mathbf{H}}_{0,0}\mathbf{U}^{(L)}_{0,1}\right)^{T}\!\!\otimes \mathbf{G}_{2,0} & \left(\tilde{\mathbf{H}}_{1,0}\mathbf{U}^{(L)}_{0,1}\right)^{T}\!\!\otimes \mathbf{G}_{2,1} & \cdots & \left(\tilde{\mathbf{H}}_{K-1,0}\mathbf{U}^{(L)}_{0,1}\right)^{T}\!\!\otimes \mathbf{G}_{2,K-1}
\end{array}
\right]
\end{align}
\end{small}
\!\!\!where $\tilde{\mathbf{H}}_{k,j} = \mathbf{E}\mathbf{H}_{k,j}$ as defined in Section \ref{Signal Alignment II}. Similarly to Appendix \ref{Proof of Lemma 1}, we can verify that a random generated $\mathbf{T}$ belongs to $\tilde{\mathcal{T}}_{M,N,d}$ with probability one. 

We now consider the last step of the proof, i.e., to show that for almost all $\mathbf{T}$ in $\tilde{\mathcal{T}}_{M,N,d}$, there exist $\{\mathbf{U}_{j,j'}\}$ and $\{\tilde{\mathbf{F}}_k\}$ satisfying \eqref{alignment} and \eqref{pneutralize'}. As analogous to Lemma \ref{ProveLemma1}, we have the following result. Note that the proof of Lemma \ref{ProveLemma2} follows that of Lemma \ref{ProveLemma1} step by step, and is omitted for brevity.

\begin{lemma}
\label{ProveLemma2}
For $d<\frac{3M^2+2MN}{9M+N}$ and $\frac{M}{N} \in \Big[1, \frac{9K+\sqrt{81K^2+60K}}{30}\Big)$, assume that there exist a certain element $\widehat{\mathbf{T}} \in \tilde{\mathcal{T}}_{M,N,d}$, full-row-rank $\{\widehat{\mathbf{U}}_{j,j'}\}$, and relay processing matrices $\{\widehat{\tilde{\mathbf{F}}}_k\}$ such that \eqref{alignment} and \eqref{pneutralize'} hold.
Then for a random $\mathbf{T} \in \tilde{\mathcal{T}}_{M,N,d}$, there exist $\{\mathbf{U}_{j,j'}\}$ and $\{\tilde{\mathbf{F}}_k\}$ satisfying \eqref{alignment} and \eqref{pneutralize'} with probability one.
\end{lemma}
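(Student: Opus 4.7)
The plan is to mirror the three-step argument used in the proof of Lemma~\ref{ProveLemma1}, substituting $\mathbf{K}\mapsto\tilde{\mathbf{K}}$, $\mathbf{K}_j\mapsto\tilde{\mathbf{K}}_j$, $\mathbf{H}_{k,j}\mapsto\tilde{\mathbf{H}}_{k,j}=\mathbf{E}\mathbf{H}_{k,j}$, and $\mathbf{F}_k\mapsto\tilde{\mathbf{F}}_k\in\mathbb{C}^{N\times N'}$ throughout. The core strategy is a standard genericity argument: (i) express the entries of all decision variables as polynomials in the channel data and a finite collection of free parameters; (ii) form a single determinantal polynomial whose non-vanishing simultaneously encodes the neutralization requirement~\eqref{pneutralize1'} and the rank requirement~\eqref{pneutralize4'}; and (iii) use the hypothesized example $(\widehat{\mathbf{T}},\{\widehat{\mathbf{U}}_{j,j'}\},\{\widehat{\tilde{\mathbf{F}}}_k\})$ to certify that this polynomial is not identically zero, then conclude by the fact that the vanishing locus of a non-zero polynomial has Lebesgue measure zero.

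Concretely, for a random $\mathbf{T}\in\tilde{\mathcal{T}}_{M,N,d}$, I would first invoke the definition of $\tilde{\mathcal{T}}_{M,N,d}$ to obtain $\{\mathbf{U}_{j,j'}\}$ satisfying~\eqref{alignment} with $\mathrm{rank}(\tilde{\mathbf{K}})=3d'M$. Rewriting~\eqref{pneutralize1'} in Kronecker form $\tilde{\mathbf{K}}\tilde{\mathbf{f}}=\mathbf{0}$ and applying Gaussian elimination yields a null-space basis $\{\tilde{\mathbf{f}}_a\}_{a=0}^{\tilde{A}-1}$ with $\tilde{A}=KNN'-3d'M\geq 0$ (the inequality following from $d<\tfrac{3M^2+2MN}{9M+N}$ by a direct computation analogous to~\eqref{wideK}), so that the general solution reads $\tilde{\mathbf{f}}=\sum_a \alpha_a\tilde{\mathbf{f}}_a$. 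A second round of elimination applied to each $\tilde{\mathbf{K}}_j$ parameterizes $\{\mathbf{U}_{j,j'}\}$ by additional free variables $\{\beta_b\}$. After clearing denominators, every entry of $\tilde{\mathbf{F}}_k$ and $\mathbf{U}_{j,j'}$ becomes a finite-degree polynomial in the entries of $\mathbf{T}$, $\{\alpha_a\}$, and $\{\beta_b\}$. I would then introduce
\[
\tilde{D}_{M,N,d}\;=\;\prod_{j=0}^{2}\det\!\left(\sum_{k=0}^{K-1}\mathbf{G}_{j,k}\tilde{\mathbf{F}}_k\bigl[\tilde{\mathbf{W}}_{k,j},\,\tilde{\mathbf{H}}_{k,j+1}\mathbf{U}^{(R)}_{j+1,j+2}\bigr]\right),
\]
whose non-vanishing is equivalent to~\eqref{pneutralize4'}, and expand it as $\sum_t p_t(\mathbf{T})\,g_t(\{\alpha_a\},\{\beta_b\})$ with distinct monomials $g_t$. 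The hypothesis supplies one specific evaluation $(\widehat{\mathbf{T}},\{\widehat{\alpha}_a\},\{\widehat{\beta}_b\})$ at which $\tilde{D}_{M,N,d}\neq 0$, so at least one $p_t$ is a non-trivial polynomial in the channel entries. The standard algebraic-geometric measure-zero argument then yields that, with probability one over $\mathbf{T}$, the restricted polynomial $\tilde{D}_{M,N,d}(\mathbf{T},\cdot,\cdot)$ is non-trivial in the free parameters and hence admits a non-vanishing evaluation, producing the desired $\{\mathbf{U}_{j,j'}\}$ and $\{\tilde{\mathbf{F}}_k\}$.

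I expect the main obstacle to be purely bookkeeping: one must verify that the pivots chosen in the Gaussian eliminations for both $\tilde{\mathbf{K}}$ and each $\tilde{\mathbf{K}}_j$ are non-zero on a full-measure subset of $\tilde{\mathcal{T}}_{M,N,d}$ (which follows directly from the rank conditions defining $\tilde{\mathcal{T}}_{M,N,d}$), and that the scalar denominators cleared when converting rational expressions to polynomials do not vanish identically on that set. A subtler point is ensuring that the deactivation structure $\tilde{\mathbf{F}}_k=\tilde{\mathbf{F}}_k\mathbf{E}$ is respected throughout the parameterization; this is automatic because the linear system \eqref{pneutralize1'} is posed directly on the reduced variable $\tilde{\mathbf{F}}_k\in\mathbb{C}^{N\times N'}$. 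Beyond these routine verifications, the argument is an essentially verbatim transcription of the proof of Lemma~\ref{ProveLemma1}, which is precisely why the authors chose to omit it.
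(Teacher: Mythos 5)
Your proposal is correct and matches the paper exactly: the paper omits this proof, stating only that it ``follows that of Lemma \ref{ProveLemma1} step by step,'' and your transcription --- with the substitutions $\mathbf{K}\mapsto\tilde{\mathbf{K}}$, $\mathbf{F}_k\mapsto\tilde{\mathbf{F}}_k$ and the correct count $KNN'-3d'M=3M^2+2MN-d(9M+N)>0$ of free parameters --- is precisely the intended argument. The only blemish is the typo $\tilde{\mathbf{F}}_k=\tilde{\mathbf{F}}_k\mathbf{E}$, which should read $\mathbf{F}_k=\tilde{\mathbf{F}}_k\mathbf{E}$ as in \eqref{deactivate}.
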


Based on Lemma 5, to show that for almost all $\mathbf{T} \in \tilde{\mathcal{T}}_{M,N,d}$ there exist $\{\mathbf{U}_{j,j'}\}$ and $\{\tilde{\mathbf{F}}_k\}$ satisfying \eqref{alignment} and \eqref{pneutralize'}, it suffices to find a certain $\widehat{\mathbf{T}} \in \tilde{\mathcal{T}}_{M,N,d}$ that satisfies the condition. To this end, we set 
{\setlength{\abovedisplayskip}{4pt}
\setlength{\belowdisplayskip}{4pt}
\begin{small}
\begin{align}
\label{F_tilde}
\widehat{\tilde{\mathbf{F}}}_k = [\mathbf{I}_{N'}, \mathbf{0}_{ N'\times (N-N')}]^T \in \mathbb{C}^{N \times N'}, \quad\!\! k=0,\cdots, K-1
\end{align}
\end{small}}
\!\!\!and randomly generate $\{\widehat{\mathbf{G}}_{j,k}\}$, $\{\widehat{\tilde{\mathbf{H}}}_{k,j}\widehat{\mathbf{U}}^{(R)}_{j,j+1}\}$ with the entries independently drawn from a continuous distribution. Then, we choose full-rank matrices $\{\widehat{\tilde{\mathbf{H}}}_{k,j}\widehat{\mathbf{U}}^{(L)}_{j,j+1}\}$ to satisfy
{\setlength{\abovedisplayskip}{4pt}
\setlength{\belowdisplayskip}{4pt}
\begin{small}
\begin{align}
\label{specificH'}
\underbrace{\left[\begin{array}{cccc}
    \widehat{\mathbf{G}}_{j,0}\widehat{\tilde{\mathbf{F}}}_0 &  \widehat{\mathbf{G}}_{j,1}\widehat{\tilde{\mathbf{F}}}_1 & \cdots & \widehat{\mathbf{G}}_{j,K-1}\widehat{\tilde{\mathbf{F}}}_{K-1}
\end{array}\right]}_{M \times KN'} \underbrace{\left[
\begin{array}{c}
\widehat{\tilde{\mathbf{H}}}_{0,j+1}\widehat{\mathbf{U}}^{(L)}_{j+1,j+2} \\ \vdots \\ \widehat{\tilde{\mathbf{H}}}_{K-1,j+1}\widehat{\mathbf{U}}^{(L)}_{j+1,j+2} \end{array}\!\!\right]}_{KN'\times d'}  = \mathbf{0}.
\end{align}
\end{small}}
\!\!\!With $\tilde{\mathbf{F}}_k$ in \eqref{F_tilde}, $\widehat{\mathbf{G}}_{j,k}\widehat{\tilde{\mathbf{F}}}_{k}$ is simply the first $N'$ columns of $\widehat{\mathbf{G}}_{j,k}$. For $\frac{M}{N} \in \Big[1, \frac{9K+\sqrt{81K^2+60K}}{30}\Big)$ and $d<\frac{3M^2+2MN}{9M+N}$, we have 
{\setlength{\abovedisplayskip}{3pt}
\setlength{\belowdisplayskip}{3pt}
\begin{equation}
KN'-M = M-d > d \geq d',
\end{equation}}
\!\!\!implying that the null space of the $M \times KN'$ matrix in \eqref{specificH'} has at least $d'$ dimensions. Thus full-rank $\{\widehat{\tilde{\mathbf{H}}}_{k,j}\widehat{\mathbf{U}}^{(L)}_{j,j+1}\}$ exist with probability one. Based on the chosen $\{\widehat{\tilde{\mathbf{H}}}_{k,j}\widehat{\mathbf{U}}^{(L)}_{j,j+1}\}$ and $\{\widehat{\tilde{\mathbf{H}}}_{k,j}\widehat{\mathbf{U}}^{(R)}_{j,j+1}\}$, we determine the values of $\{\widehat{\tilde{\mathbf{H}}}_{k,j}\}$ and $\{\widehat{\mathbf{U}}_{j,j+1}\}$ (not necessarily unique). With $\eqref{alignment}$, $\{\widehat{\mathbf{U}}_{j+1,j}\}$ are also determined. Finally, $\widehat{\mathbf{T}}$ is determined by $\{\widehat{\tilde{\mathbf{H}}}_{k,j}, \widehat{\mathbf{G}}_{j,k}\}$ and it can be verified that $\widehat{\mathbf{T}} \in \tilde{\mathcal{T}}_{M,N,d}$ with probability one.

We now show that the above constructed $\widehat{\mathbf{T}}$, $\{\widehat{\tilde{\mathbf{F}}}_{k}\}$, and $\{\widehat{\mathbf{U}}_{j,j'}\}$ satisfy \eqref{alignment} and \eqref{pneutralize'} with probability one. By construction, \eqref{alignment} is automatically met. Further, from \eqref{specificH'}, we see that \eqref{pneutralize1'} holds with probability one. To check \eqref{pneutralize4'}, it suffices to consider the case $j=0$ by symmetry. Note that $\{\widehat{\tilde{\mathbf{W}}}_{k,0}\}$ are determined by $\{\widehat{\tilde{\mathbf{H}}}_{k,2}\widehat{\mathbf{U}}^{(L)}_{2,0}\}$ and $\{\widehat{\tilde{\mathbf{H}}}_{k,0}\widehat{\mathbf{U}}^{(L)}_{0,1}\}$, which only depend on $\{\widehat{\mathbf{G}}_{1,k}\}$ and $\{\widehat{\mathbf{G}}_{2,k}\}$. That is, $\{\widehat{\tilde{\mathbf{W}}}_{k,0}\}$ are not functions of $\{\widehat{\mathbf{G}}_{0,k}\}$. Moreover, $\{\widehat{\mathbf{G}}_{0,k}\}$, $\{\widehat{\tilde{\mathbf{F}}}_k\}$, and $\{\widehat{\tilde{\mathbf{H}}}_{k,1}\widehat{\mathbf{U}}^{(R)}_{1,2}\}$ are independent of each other by construction. Therefore, 
{\setlength{\abovedisplayskip}{4pt}
\setlength{\belowdisplayskip}{4pt}
\begin{small}
\begin{equation}
\mathrm{rank}\left(\widehat{\mathbf{G}}_{0,k}\widehat{\tilde{\mathbf{F}}}_k\left[\widehat{\tilde{\mathbf{W}}}_{k,0},\widehat{\tilde{\mathbf{H}}}_{k,1}\widehat{\mathbf{U}}^{(R)}_{1,2}\right]\right) = \min\{M,N'\} = N'.
\end{equation}
\end{small}}
\!\!As $M<KN' = 2M-d$, we obtain
{\setlength{\abovedisplayskip}{4pt}
\setlength{\belowdisplayskip}{4pt}
\begin{small}
\begin{equation}
\mathrm{rank}\left(\sum^{K-1}_{k=0}\widehat{\mathbf{G}}_{0,k}\widehat{\tilde{\mathbf{F}}}_k\left[\widehat{\tilde{\mathbf{W}}}_{k,0},\widehat{\tilde{\mathbf{H}}}_{k,1}\widehat{\mathbf{U}}^{(R)}_{1,2}\right]\right) = \mathrm{min}\{KN',M\} =  M
\end{equation}
\end{small}}
\!\!with probability one, and so \eqref{pneutralize4'} is met. We complete the proof of Lemma \ref{LemmaDoF2}.

\subsection{Proof of Lemma \ref{LemmaDoF3}}
As analogous to the arguments in Appendix I-A, the general solution of \eqref{zeroforcing_1}-\eqref{zeroforcing_2} with respect to $\{\mathbf{F}_k\}$ can be represented by finite-degree polynomials of the entries of $\{\mathbf{H}_{k,j}, \mathbf{G}_{j,k}\}$. Then similar to Lemma \ref{ProveLemma1}, we have the following lemma for the solvability of \eqref{zeroforcing}. Note that to ease the notation, we do not define the set $\mathcal{T}_{M,N,d}$, but rephrase its constraints as conditions in the lemma. We omit the proof for brevity. 

\begin{lemma}
\label{ProveLemma3}
For $\frac{M}{N} \in \Big(0,\frac{\sqrt{3K}}{K}\Big)$ and $d = \frac{M}{2}$, assume that there exists a certain choice of $\{\widehat{\mathbf{H}}_{k,j},\widehat{\mathbf{G}}_{j,k}\}$ and $\{\widehat{\mathbf{F}}_k\}$ satisfying \eqref{zeroforcing} and
$\mathrm{rank}(\widehat{\mathbf{K}}') = 3M^2$, where $\widehat{\mathbf{K}}'$, similar to $\mathbf{K}$, is defined by rewriting \eqref{zeroforcing_1}-\eqref{zeroforcing_2} using Kronecker product, i.e.,
{\setlength{\abovedisplayskip}{4pt}
\setlength{\belowdisplayskip}{4pt}
\begin{small}
\begin{align}
\label{Kbar}
\widehat{\mathbf{K}}'\!=\!\left[\!\!
\begin{array}{cccc}
    \left(\widehat{\mathbf{H}}_{0,1}\mathbf{U}_{1,2}\right)^{T}\!\!\!\! \otimes \widehat{\mathbf{G}}_{0,0}\! & \left(\widehat{\mathbf{H}}_{1,1}\mathbf{U}_{1,2}\right)^{T}\!\!\!\!\otimes \widehat{\mathbf{G}}_{0,1} & \!\cdots\! &  \left(\widehat{\mathbf{H}}_{K-1,1}\mathbf{U}_{1,2}\right)^{T}\!\!\!\!\otimes \widehat{\mathbf{G}}_{0,K-1}\\
    \left(\widehat{\mathbf{H}}_{0,2}\mathbf{U}_{2,1}\right)^{T}\!\!\!\! \otimes \widehat{\mathbf{G}}_{0,0}\! & \left(\widehat{\mathbf{H}}_{1,2}\mathbf{U}_{2,1}\right)^{T}\!\!\!\!\otimes \widehat{\mathbf{G}}_{0,1} & \!\cdots\! &  \left(\widehat{\mathbf{H}}_{K-1,2}\mathbf{U}_{2,1}\right)^{T}\!\!\!\!\otimes \widehat{\mathbf{G}}_{0,K-1}\\
    \left(\widehat{\mathbf{H}}_{0,2}\mathbf{U}_{2,0}\right)^{T}\!\!\!\! \otimes \widehat{\mathbf{G}}_{1,0}\! & \left(\widehat{\mathbf{H}}_{1,2}\mathbf{U}_{2,0}\right)^{T}\!\!\!\!\otimes \widehat{\mathbf{G}}_{1,1} & \!\cdots\! &  \left(\widehat{\mathbf{H}}_{K-1,2}\mathbf{U}_{2,0}\right)^{T}\!\!\!\!\otimes \widehat{\mathbf{G}}_{1,K-1}\\
    \left(\widehat{\mathbf{H}}_{0,0}\mathbf{U}_{0,2}\right)^{T}\!\!\!\! \otimes \widehat{\mathbf{G}}_{1,0}\! & \left(\widehat{\mathbf{H}}_{1,0}\mathbf{U}_{0,2}\right)^{T}\!\!\!\!\otimes \widehat{\mathbf{G}}_{1,1} & \!\cdots\! &  \left(\widehat{\mathbf{H}}_{K-1,0}\mathbf{U}_{0,2}\right)^{T}\!\!\!\!\otimes \widehat{\mathbf{G}}_{1,K-1}\\
    \left(\widehat{\mathbf{H}}_{0,0}\mathbf{U}_{0,1}\right)^{T}\!\!\!\! \otimes \widehat{\mathbf{G}}_{2,0}\! & \left(\widehat{\mathbf{H}}_{1,0}\mathbf{U}_{0,1}\right)^{T}\!\!\!\!\otimes \widehat{\mathbf{G}}_{2,1} & \!\cdots\! &  \left(\widehat{\mathbf{H}}_{K-1,0}\mathbf{U}_{0,1}\right)^{T}\!\!\!\!\otimes \widehat{\mathbf{G}}_{2,K-1}\\
      \left(\widehat{\mathbf{H}}_{0,1}\mathbf{U}_{1,0}\right)^{T}\!\!\!\!\otimes \widehat{\mathbf{G}}_{2,0} \!& \left(\widehat{\mathbf{H}}_{1,1}\mathbf{U}_{1,0}\right)^{T}\!\!\!\!\otimes \widehat{\mathbf{G}}_{2,1} & \!\cdots\! & \left(\widehat{\mathbf{H}}_{K-1,1}\mathbf{U}_{1,0}\right)^{T}\!\!\!\!\otimes \widehat{\mathbf{G}}_{2,K-1}
\end{array}
\!\!\right]\!\in\! \mathbb{C}^{3M^2\times KN^2}\!.
\end{align}
\end{small}}
\!\!Then for random $\{\mathbf{H}_{k,j},\mathbf{G}_{j,k}\}$, together with $\{\mathbf{U}_{j,j'}\}$ and $\{\mathbf{V}_j\}$ constructed in \eqref{UV}, there exist $\{\mathbf{F}_k\}$ satisfying \eqref{zeroforcing} with probability one.
\end{lemma}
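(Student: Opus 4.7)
The statement has the same structure as Lemma \ref{ProveLemma1}, so my plan is to mirror that proof step by step, with the set $\mathcal{T}_{M,N,d}$ of ``generic'' channel realizations now implicit in the hypothesis that $\mathrm{rank}(\mathbf{K}') = 3M^2$ holds for a random channel with probability one. First I would observe that, under the choice of $\{\mathbf{U}_{j,j'}\}$ and $\{\mathbf{V}_j\}$ fixed in \eqref{UV}, the interference neutralization constraints \eqref{zeroforcing_1}--\eqref{zeroforcing_2} rewrite in Kronecker-product form as $\mathbf{K}'\mathbf{f}=\mathbf{0}$, where $\mathbf{f}=[\mathrm{vec}(\mathbf{F}_0)^T,\cdots,\mathrm{vec}(\mathbf{F}_{K-1})^T]^T\in \mathbb{C}^{KN^2}$ and $\mathbf{K}'$ is the random-channel analogue of the matrix in \eqref{Kbar}. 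Since $\mathrm{rank}(\widehat{\mathbf{K}}')=3M^2$ for the specific realization $\widehat{\mathbf{T}}$, some $3M^2\times 3M^2$ minor of $\widehat{\mathbf{K}}'$ is nonzero; that minor is a polynomial in the entries of $\mathbf{T}$, so by the standard ``nonzero polynomial vanishes only on a measure-zero set'' argument, $\mathrm{rank}(\mathbf{K}')=3M^2$ also for a random $\mathbf{T}$ with probability one.

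Next, for a random $\mathbf{T}$, apply Gaussian elimination to write the general solution of $\mathbf{K}'\mathbf{f}=\mathbf{0}$ as
\begin{equation}
\mathbf{f}=\sum_{a=0}^{A-1}\alpha_a\mathbf{f}_a,\quad A=KN^2-3M^2>0,
\end{equation}
where $A>0$ follows from $\frac{M}{N}<\frac{\sqrt{3K}}{3}$ (which is implied by $\frac{M}{N}\in(0,\frac{\sqrt{3K}}{K})$ for $K\geq 2$), and $\alpha_0,\ldots,\alpha_{A-1}$ are free scalars. After an appropriate common scaling, each entry of $\mathbf{f}_a$ can be taken as a finite-degree polynomial in the entries of $\mathbf{T}$, since Gaussian elimination uses only field operations on the entries of $\mathbf{K}'$ and clearing the resulting denominators preserves non-vanishing. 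Consequently, each entry of every $\mathbf{F}_k$ obtained via $\mathbf{f}$ is a polynomial in the entries of $\mathbf{T}$ and in $\{\alpha_a\}$.

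I would then encode the rank requirement \eqref{rank} (which under \eqref{UV} reduces to requiring the effective $M\times M$ channel matrix at each user to be full-rank) as the non-vanishing of the determinant product
\begin{equation}
D_{M,N,d}=\prod_{j=0}^{2}\det\!\left(\sum_{k=0}^{K-1}\mathbf{G}_{j,k}\mathbf{F}_k\mathbf{W}_{k,j}\right),
\end{equation}
which is again a finite-degree polynomial in the entries of $\mathbf{T}$ and in $\{\alpha_a\}$. By hypothesis there exist $\widehat{\mathbf{T}}$ and $\{\widehat{\mathbf{F}}_k\}$ satisfying all of \eqref{zeroforcing}; in particular $D_{M,N,d}$ does not vanish at this choice, so $D_{M,N,d}$ is not the zero polynomial. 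Expanding $D_{M,N,d}=\sum_{t}p_t(\mathbf{T})g_t(\{\alpha_a\})$ with distinct monomials $g_t$ in $\{\alpha_a\}$, at least one $p_t(\cdot)$ is a nonzero polynomial. For a random $\mathbf{T}$, that $p_t(\mathbf{T})\neq 0$ with probability one, so $D_{M,N,d}$ is a nonzero polynomial in $\{\alpha_a\}$; any $\{\alpha_a\}$ off its zero locus then yields $\{\mathbf{F}_k\}$ satisfying both \eqref{zeroforcing_1}--\eqref{zeroforcing_2} and \eqref{rank}.

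The main obstacle, as in Lemmas \ref{LemmaDoF1} and \ref{LemmaDoF2}, is to rigorously justify that the basis vectors $\mathbf{f}_a$ produced by Gaussian elimination can be written as genuine polynomials (rather than rational functions) in the entries of $\mathbf{T}$, so that $D_{M,N,d}$ is a polynomial and the measure-zero argument applies; everything else is a direct transcription of the polynomial/algebraic-geometry machinery already developed in the proof of Lemma \ref{ProveLemma1}. Once this is in place, the conclusion follows because the zero set of a nonzero polynomial in $\{\alpha_a\}$ has Lebesgue measure zero in the parameter space.
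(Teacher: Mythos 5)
Your reconstruction matches what the paper intends: the paper omits the proof of this lemma, stating only that it parallels Lemma \ref{ProveLemma1}, and your argument is exactly that parallel (Kronecker-product form, polynomial solution basis via Gaussian elimination with cleared denominators, the determinant product $D_{M,N,d}$, and the measure-zero argument over $\{\alpha_a\}$), with the $\{\beta_b\}$ variables correctly dropped since $\{\mathbf{U}_{j,j'}\}$ are fixed constants in \eqref{UV}. One small correction: for $K=2$ the range $\frac{M}{N}<\frac{\sqrt{3K}}{K}=\frac{\sqrt{6}}{2}$ does \emph{not} imply $\frac{M}{N}<\frac{\sqrt{3K}}{3}=\frac{\sqrt{6}}{3}$, so your stated justification of $A=KN^2-3M^2>0$ fails there (this traces back to an inconsistency in the paper's own lemma statement); the clean fix is to observe that $A>0$ already follows from the hypotheses, since $\mathrm{rank}(\widehat{\mathbf{K}}')=3M^2$ forces $KN^2\geq 3M^2$, and the rank requirement \eqref{rank} forces $\widehat{\mathbf{f}}\neq\mathbf{0}$ to lie in the null space of $\widehat{\mathbf{K}}'$, ruling out equality.
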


Based on Lemma 6, what remains is to construct $\{\widehat{\mathbf{H}}_{k,j},\widehat{\mathbf{G}}_{j,k}\}$ and $\{\widehat{\mathbf{F}}_k\}$ satisfying \eqref{zeroforcing} and $\mathrm{rank}(\widehat{\mathbf{K}}') = 3M^2$. We basically follow the construction in the proof of Lemma \ref{LemmaDoF1}. Let $\widehat{\mathbf{F}}_k = \mathbf{I}_k$, for $k=0,\cdots,K-1$, and $\{\widehat{\mathbf{G}}_{j,k}\}$ be random matrices with entries independently drawn from a continuous distribution. Then we design $\{\widehat{\mathbf{H}}_{k,j}\mathbf{U}_{j,j'}\}$ with full column rank satisfying
{\setlength{\abovedisplayskip}{4pt}
\setlength{\belowdisplayskip}{4pt}
\begin{small}
\begin{align}
\label{specificH''}
\underbrace{\left[\begin{array}{cccc}
    \widehat{\mathbf{G}}_{j,0} &  \widehat{\mathbf{G}}_{j,1} & \cdots & \widehat{\mathbf{G}}_{j,K-1}
\end{array}\right]}_{M \times KN}
 \underbrace{\left[\begin{array}{cc}
 \widehat{\mathbf{H}}_{1,j+1}\mathbf{U}_{j+1,j+2} &\widehat{\mathbf{H}}_{1,j+2}\mathbf{U}_{j+2,j+1}\\  
 \vdots &\vdots\\ 
 \widehat{\mathbf{H}}_{K-1,j+1}\mathbf{U}_{j+1,j+2} & \widehat{\mathbf{H}}_{K-1,j+2}\mathbf{U}_{j+2,j+1} \end{array}\!\!\right]}_{KN\times M}  = \mathbf{0}, \quad\!\! j = 0,1,2
\end{align}
\end{small}}
\!\!\!where $\{\mathbf{U}_{j,j'}\}$ are constructed in \eqref{UV}. From the rank-nullity theorem, there exist full-column-rank $\{\widehat{\mathbf{H}}_{k,j}\mathbf{U}_{j,j'}\}$ satisfying \eqref{specificH''} with probability one, provided $2M \leq KN$, \Big(which is true for $\frac{M}{N} \in \left(0, \frac{\sqrt{3K}}{3} \right)$\Big). Note that from \eqref{UV}, $\widehat{\mathbf{H}}_{k,j}\mathbf{U}_{j,j+1}$ and $\widehat{\mathbf{H}}_{k,j+1}\mathbf{U}_{j+1,j}$ are the left $\frac{M}{2}$ columns and the right $\frac{M}{2}$ columns of $\widehat{\mathbf{H}}_{k,j}$, respectively. Thus, we can combine them together to form $\widehat{\mathbf{H}}_{k,j}$ with full column rank. It can be readily verified that, together with $\{\mathbf{V}_j\}$ and $\{\mathbf{U}_{j,j'}\}$ in \eqref{UV}, the above constructed $\{\widehat{\mathbf{H}}_{k,j}, \widehat{\mathbf{G}}_{j,k}\}$ and $\{\widehat{\mathbf{F}}_k\}$ satisfy \eqref{zeroforcing_1}-\eqref{zeroforcing_2}. Moreover, since $\{\widehat{\mathbf{G}}_{j,k}\}$ are randomly generated, \eqref{rank} and $\mathrm{rank}(\widehat{\mathbf{K}}') = 3M^2$ are met with probability one, which concludes the proof of Lemma \ref{LemmaDoF3}.

\section{Symbol Extension}
With the techniques of symbol extension and antenna disablement, we show that all the results in Section IV hold for any rational DoF $d$. We focus on Signal Alignment I in Section IV-B. The treatments for other signal alignment approaches are similar. Let $M^*<M$ satisfying $\frac{M^*}{N} \in \Big[ \frac{9K+\sqrt{81K^2+60K}}{30},$ $\frac{3K+\sqrt{9K^2-12K}}{6} \Big)$. Further, assume that both $M^*$ and $d$ are rational numbers satisfying $d<\frac{M^*}{3}+\frac{KN^2}{9M^*}$. Let $L$ be a positive integer such that both $LM^*$ and $Ld$ are integers. 
Then, we extend the channel by $L$ times and disable $(M-M^*)L$ antennas at each user end. With some abuse of notation, we represent the extended channel of each link in a block-diagonal form as 
{\setlength{\abovedisplayskip}{1pt}
\setlength{\belowdisplayskip}{1pt}
\begin{subequations}
\begin{small}
\begin{align}
\label{block}
\mathbf{H}_{k,j} &=\mathrm{diag}\left(\mathbf{H}^{(0)}_{k,j},\mathbf{H}^{(1)}_{k,j},\cdots,\mathbf{H}^{(L-1)}_{k,j}\right)\in \mathbb{C}^{LN \times LM^*}\\
\mathbf{G}_{j,k} &= \mathrm{diag}\left(\mathbf{G}^{(0)}_{j,k}, \mathbf{G}^{(1)}_{j,k} \cdots,\mathbf{G}^{(L-1)}_{j,k}\right)\in \mathbb{C}^{LM^* \times LN}
\end{align}
\end{small}
\end{subequations}}
\!\!\!\!where $\mathbf{H}^{(l)}_{k,j}$ and $\mathbf{G}^{(l)}_{j,k}$ are the uplink and downlink channel matrices in the $(l+1)$-th channel use respectively, with
\begin{align}
&\mathbf{H}^{(l)}_{k,j} \in \mathbb{C}^{N \times \lceil M^* \rceil}, \mathbf{G}^{(l)}_{j,k} \in \mathbb{C}^{\lceil M^* \rceil \times N},\quad \!\! \mathrm{for} \quad\!\! l = 0, \cdots, L\left(M^* -\lfloor M^*\rfloor\right)-1\nonumber \\
&\mathbf{H}^{(l)}_{k,j} \in \mathbb{C}^{N \times \lfloor M^* \rfloor},  \mathbf{G}^{(l)}_{j,k} \in \mathbb{C}^{\lfloor M^* \rfloor \times N}, \quad\!\! \mathrm{for} \quad\!\! l=L\left(M^* -\lfloor M^*\rfloor\right), \cdots, L-1. \nonumber
\end{align}

Our goal is to show that any DoF $Ld$ with $d < \frac{M^*}{3}+\frac{KN^2}{9M^*}$ is achievable for the extended channel, implying that an average DoF $d$ is achieved per channel use. To this end, we need to design $\mathbf{F}_k \in \mathbb{C}^{LN \times LN}, \mathbf{U}_{j,j'} \in \mathbb{C}^{LM^* \times Ld}$ and $\mathbf{V}_{j} \in \mathbb{C}^{2Ld \times LM^*}$ to satisfy condition \eqref{zeroforcing} (with the rank $2d$ in \eqref{rank} replaced by $2Ld$). This can be done by performing Signal Alignment I and following the arguments in Appendix I-A step by step, with the only main difference being that $\{\mathbf{H}_{k,j}\}$ and $\{\mathbf{G}_{j,k}\}$ take the block diagonal forms in \eqref{block}, which does not change our conclusion. We omit the details for brevity.
\end{appendices}

\ifCLASSOPTIONcaptionsoff
  \newpage
\fi

\end{document}